\theoremstyle{plain}
\newtheorem{theorem}{Theorem}[section]
\newtheorem{proposition}[theorem]{Proposition}
\newtheorem{lemma}[theorem]{Lemma}
\newtheorem{corollary}[theorem]{Corollary}
\theoremstyle{definition}
\newtheorem{definition}[theorem]{Definition}
\theoremstyle{remark}
\newtheorem{remark}[theorem]{Remark}
\icmltitlerunning{Abstracting Imperfect Information Away from Two-Player Zero-Sum Games}
\begin{document}

\twocolumn[
\icmltitle{Abstracting Imperfect Information Away from Two-Player Zero-Sum Games}



\icmlsetsymbol{equal}{*}
\icmlsetsymbol{atmeta}{\dag}

\begin{icmlauthorlist}
\icmlauthor{Samuel Sokota}{cmu,atmeta}
\icmlauthor{Ryan D'Orazio}{mila}
\icmlauthor{Chun Kai Ling}{cmu}
\icmlauthor{David J. Wu}{meta}
\icmlauthor{J. Zico Kolter}{cmu,bosch}
\icmlauthor{Noam Brown}{meta}
\end{icmlauthorlist}

\icmlaffiliation{cmu}{Carnegie Mellon University}
\icmlaffiliation{mila}{Mila, Université de Montréal}
\icmlaffiliation{meta}{Meta AI}
\icmlaffiliation{bosch}{Bosch Center for AI}

\icmlcorrespondingauthor{Samuel Sokota}{ssokota@andrew.cmu.edu}

\icmlkeywords{Machine Learning, ICML}

\vskip 0.3in
]



\printAffiliationsAndNotice{\atmeta}  


\begin{abstract}
In their seminal work, \citet{nayyar} showed that imperfect information can be abstracted away from common-payoff games by having players publicly announce their policies as they play.
This insight underpins sound solvers and decision-time planning algorithms for common-payoff games.
Unfortunately, a naive application of the same insight to two-player zero-sum games fails because Nash equilibria of the game with public policy announcements may not correspond to Nash equilibria of the original game.
As a consequence, existing sound decision-time planning algorithms require complicated additional mechanisms that have unappealing properties.
The main contribution of this work is showing that certain regularized equilibria do not possess the aforementioned non-correspondence problem---thus, computing them can be treated as perfect-information problems.
Because these regularized equilibria can be made arbitrarily close to Nash equilibria, our result opens the door to a new perspective to solving two-player zero-sum games and yields a simplified framework for decision-time planning in two-player zero-sum games, void of the unappealing properties that plague existing decision-time planning approaches.
\end{abstract}

\section{Introduction}

In single-agent settings, dynamic programming \citep{dp} is the bedrock for reinforcement learning \citep{sutton2018reinforcement}, justifying approximating optimal policies by backward induction and facilitating a simple framework for decision-time planning.
One might hope that dynamic programming could provide similar grounding in multi-agent settings for well-defined notions of optimality, like optimal joint policies in common-payoff games, Nash equilibria in two-player zero-sum (2p0s) games, and team correlated equilibria in two-team zero-sum (2t0s) games.
Unfortunately, this is not straightforwardly the case when there is imperfect information---a term that we use to refer to games in which one player has knowledge that another does not or two players act simultaneously.
This difficulty arises from two causes, which we call the \textit{backward dependence problem} and the \textit{non-correspondence problem}.

\begin{figure*}[h]
    \centering
    \includegraphics[width=.99\textwidth]{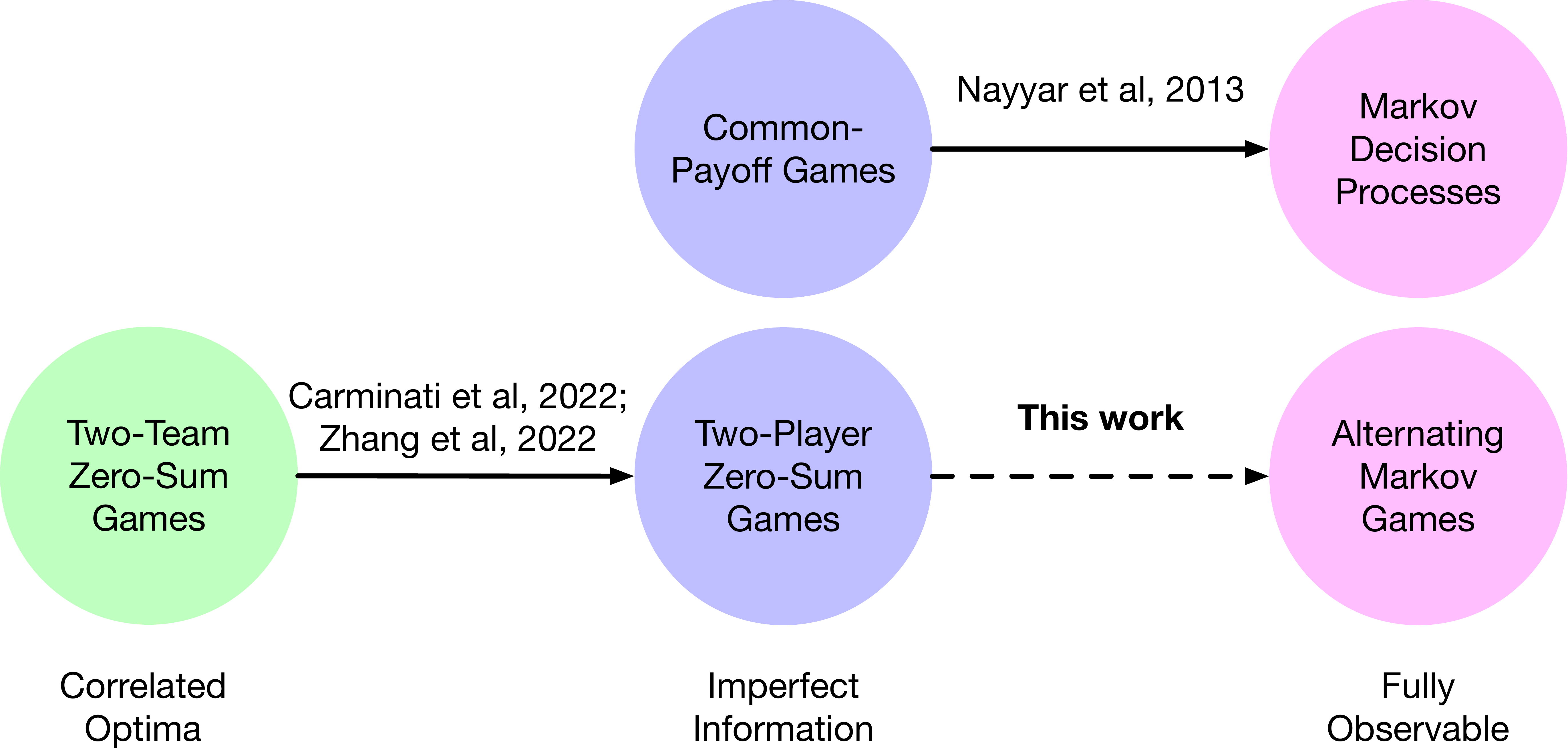}
    \caption{Our main contribution in the context of related work, at an abstract level. Solid lines denote reductions; the dashed line denotes a reduction that holds under a class of regularized objectives.}
    \label{fig:diagram}
\end{figure*}

\textit{The backward dependence problem} is that computing the expected return starting from a decision point generally requires knowledge about policies that were played up until now, in addition to the policies that will be played going forward.
This is in stark contrast to perfect information settings, in which the expected return starting from a decision point is independent of the policy played before arriving at the decision point.
As a result of this bidirectional temporal dependence, backward induction arguments that work in perfect information settings fail in imperfect information settings.

In their seminal work, \citet{nayyar} showed that the backward dependence problem can be resolved by having players publicly announce their policies as they play.
Using this insight, a common-payoff game can be transformed into a Markov decision process (MDP) that we call the public belief Markov decision processes (PuB-MDP).
Importantly, deterministic optimal policies in the PuB-MDP can be mapped back to optimal joint policies of the original common-payoff game.

Having players publicly announce their policies can also be used to transform 2p0s games into alternating Markov games (AMGs) with public belief states \citep{Wiggers2016StructureIT,nayyar-zs,brown2020rebel,buffet:hal-03080287,delage:hal-03523399,kartik2021upper}, which we call public belief alternating Markov games (PuB-AMGs).
AMGs are fully-observable turn-based games (like Go and chess)\footnote{Formally, the board states of Go and chess are technically not actually Markov because of rules like the threefold repetition rule.
Nevertheless, they capture the spirit of fully-observable turn-based games.} and, therefore, are amenable to dynamic programming-based approaches \citep{Littman96algorithmsfor}.
Unfortunately, computing Nash equilibria of PuB-AMGs carries little value because these Nash equilibria may not correspond with Nash equilibria in the original game \citep{cfrd,Ganzfried,brown2020rebel,sustr}. 
Indeed, as we will show, they may correspond with arbitrarily exploitable policies.
We call this problem \textit{the non-correspondence problem}.

The main contribution of this work is showing that regularized minimax objectives that guarantee unique equilibria in subgames do not suffer from the non-correspondence problem.
In other words, computing these uniqueness-guaranteeing equilibria can be reduced to computing the associated equilibria in the PuB-AMG.
Because uniqueness can be guaranteed using arbitrarily small amounts of entropy regularization \citep{fforel}, our reduction is straightforward to apply in practice and yields solutions that can be made arbitrarily close to Nash equilibria.

We highlight three points regarding this reduction:
\begin{enumerate}[leftmargin=*]
    \item It is the first reduction of its kind in literature; specifically, it is the first equilibrium preserving transformation from imperfect information 2p0s games to perfect information 2p0s games.
    \item It yields a simple framework for decision-time planning in 2p0s games with desirable continuity properties.
    In contrast, existing approaches \citep{libratus,safenested,deepstack,brown2020rebel,pog} are hampered by a number of complications that involve undesirable aspects, including discontinuous functions.
    \item It can be applied across the whole class of 2t0s games (as depicted by Figure \ref{fig:diagram}) because of the recent results of \citet{carminati1,Zhang2022TeamBD,carminati2}, who showed that 2t0s games can be cast as 2p0s games.
\end{enumerate}

\section{Notation}

We introduce two sets of formalisms.
The first, which we call
finite-horizon sequential games, describes settings in which players act one-at-a-time and in which the game terminates after a fixed number of steps.
This setting is equivalent to perfect recall timeable \citep{timeable} extensive form games---see, for example, \citet{fosg} for more details.

The second formalism, which we call finite-horizon fully-observable sequential games, captures a special case of the previous setting in which there is a Markov state that is observable to all players.
We use this formalism to express games with public policy announcements. 

\subsection{Finite-Horizon Sequential Games}
Symbolically, we say a setting is a finite-horizon sequential game if it can be described by a tuple
\[
\langle \mathbb{A},
[\mathbb{O}_i], \mathbb{O}_{\text{pub}}, [\mathbb{H}_i], \mathbb{H}_{\text{pub}},
\mathbb{H},
\mu,
[\mathcal{O}_i], \mathcal{O}_{\text{pub}},
[\mathcal{R}_i], \mathcal{T}, T \rangle,
\] 
where 
\begin{itemize}[leftmargin=*]
    \item $i$ ranges from $0$ to $N-1$ and $\iota$ denotes the acting player.\footnote{Our usage of $\iota$ is informal but unambiguous in context.}
    \item $\mathbb{A}$ is the set of actions.
    \item $\mathbb{O}_i$ is the set of private observations for player $i$.
    \item $\mathbb{O}_{\text{pub}}$ is the set of public observations (i.e., observations that are immediate common knowledge among players).
    \item $\mathbb{H}_i = \cup_{t=0}^{T-1} (\mathbb{O}_{\text{pub}} \times \mathbb{O}_i)^t \times \mathbb{O}_{\text{pub}} \times \mathbb{O}_i$ is player $i$'s action-observation histories (AOHs).
    \item $\mathbb{H}_{\text{pub}} = \cup_{t=0}^{T-1} \mathbb{O}_{\text{pub}}^t$ is the set of public histories.
    \item $\mathbb{H} \subset \mathbb{H}_0 \times \dots \times \mathbb{H}_{N-1}$ is the set of histories.
    \item $\mu \in \Delta(\mathbb{H}(h_{\text{pub}}^0))$ is the initial history distribution.
    \item $\mathcal{O}_i \colon \mathbb{H} \to \mathbb{O}_i$ is player $i$'s observation function.\footnote{We assume that, if $i$ acts a time $t$, $i$'s action is included in its private observation at time $t+1$}
    \item $\mathcal{O}_{\text{pub}} \colon \mathbb{H} \to \mathbb{O}_{\text{pub}}$ is the public observation function.
    \item $\mathcal{R}_i \colon \mathbb{H} \times \mathbb{A} \to \mathbb{R}$ is the player $i$'s reward function.
    \item $\mathcal{T} \colon \mathbb{H} \times \mathbb{A} \to \Delta(\mathbb{H})$ is the transition function. 
    \item $T$ is the time horizon at which the game terminates.
\end{itemize}
For a given $h_{\text{pub}} \in \mathbb{H}_{\text{pub}}$, we use $\mathbb{H}_i(h_{\text{pub}})$ to denote the set of AOHs for player $i$ that are consistent with $h_{\text{pub}}$ and $\mathbb{H}(h_{\text{pub}})$ to denote the set of histories that are consistent with $h_{\text{pub}}$.
Also, for a history $h$, we use $h_{\iota}$ to denote the AOH for the acting player at history $h$.
We use capitals of the same letters to denote random variables of the same types.
We use $\pi \colon \cup_i \mathbb{H}_i \to \Delta (\mathbb{A})$ to denote the joint policy of the players and $\pi_i \colon \mathbb{H}_i \to \Delta (\mathbb{A})$ to denote player $i$'s policy.
We use $-i$ to denote ``all players except player $i$''.

\paragraph{Special Cases}
This work will make use of the following special cases.
\begin{itemize}[leftmargin=*]
    \item \textbf{Two team zero sum}: Games in which $\{0, \dots, N-1\}$ is a disjoint union of two blocks, where $\forall i, j, \mathcal{R}_i = \mathcal{R}_j$ if $i, j$ belong to the same block and $\mathcal{R}_i = -\mathcal{R}_j$ if $i, j$ belong to opposite blocks.
    \item \textbf{Common payoff}: Games in which $\forall i, j, \mathcal{R}_i = \mathcal{R}_j$.
    \item \textbf{Two player zero sum}: Games in which $N=2$ and $\mathcal{R}_0 = - \mathcal{R}_1$.
\end{itemize}
In these special cases, the reward of all players is uniquely determined by the reward of any individual player.
Thus, we will drop the player index $i$ on the reward function and use $\mathcal{R}=\mathcal{R}_0$.

\paragraph{Subgames} For a given finite-horizon sequential game, we use the term subgame to refer to a game that begins with initial history distribution $\mu \in \Delta(\mathbb{H}(h_{\text{pub}}))$ for some particular $h_{\text{pub}}$ reflecting public information revealed so far and is otherwise the same as the original game.

\subsection{Finite-Horizon Fully-Observable Sequential Games} We use the terminology finite-horizon fully-observable sequential game to describe tuples 
\[
\langle \mathbb{A},
\mathbb{S}, s^0, [\mathcal{R}_i], \mathcal{T}, T \rangle,
\]
where
\begin{itemize}[leftmargin=*]
    \item $\mathbb{S}$ is the set of states.
    \item $s^0 \in \mathbb{S}$ is the initial state.
    \item $\mathcal{R}_i \colon \mathbb{S} \times \mathbb{A} \to \mathbb{R}$ is the player $i$'s reward function.
    \item $\mathcal{T} \colon \mathbb{S} \times \mathbb{A} \to \Delta(\mathbb{S})$ is the transition function.
    \item $i$, $\iota$, $\mathbb{A}$, and $T$ are defined as they were in the finite-horizon sequential game formalism.
\end{itemize}

We use $\pi \colon \mathbb{S} \to \Delta (\mathbb{A})$ to denote the joint policy and $\pi_i$ to denote player $i$'s policy.

\paragraph{Special Cases} In the fully-observable context, we are interested in the following settings.
\begin{itemize}[leftmargin=*]
    \item \textbf{Markov decision processes (MDPs)}: Games in which $N=1$.
    \item \textbf{Alternating Markov games (AMGs)}: Games that are two player zero sum.
\end{itemize}
As before, we will use  $\mathcal{R}=\mathcal{R}_0$ for conciseness.

\paragraph{Subgames} For a given finite-horizon fully-observable sequential game, we use the term subgame to refer to a game that begins with initial state $s^0=s$ for some particular $s$ and otherwise proceeds by the same rules of the original game.

\section{Background}

\paragraph{The Backward Dependence Problem} To illustrate the presence of the backward dependence problem in even very simple settings, we show a cooperative matching pennies game in Figure \ref{fig:mp}.
The goal of the game is for the blue player and the red player to select the same side of a coin.
The blue player moves first; then, without observing the blue player's choice, the red player moves second.
Because the red player does not observe the blue player's choice (as denoted by the dotted line between the two nodes), it must make the same decision at both nodes.

Now, let us consider the value for the red player.
In perfect-information settings, because the red player is at a penultimate node, such a value would be equal to the expected return for the best action, independent of any prior events.
However, here, with imperfect information, the expected return for the best action is equal to $\max(p, 1-p)$ where $p$ is the probability that the blue player selects \textit{heads}.
If the blue player's policy is unknown, there is no way to compute this value, illustrating that the backward induction approach to learning in perfect information settings fails in imperfect information settings.

\begin{figure}
  \begin{center}
    \includegraphics[width=0.48\textwidth]{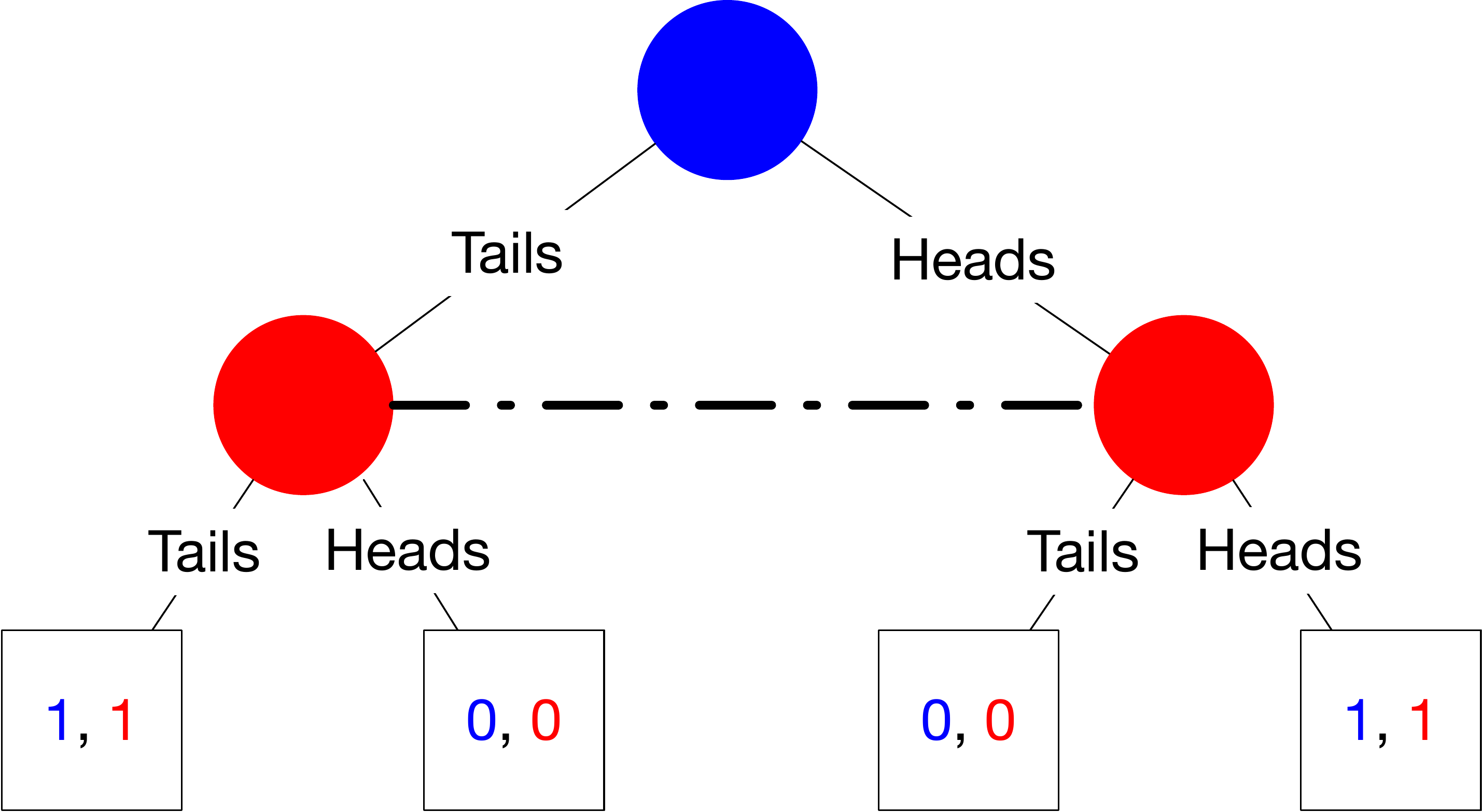}
  \end{center}
  \caption{The best action for the red player depends on the blue player's policy.}
  \label{fig:mp}
\end{figure}

\paragraph{The Public Belief Markov Decision Process} In their seminal work, \citet{nayyar} described a reduction for turning common-payoff games into partially observable Markov decision processes (POMDPs) in such a way that circumvents the backward dependence problem.
This reduction can be chained with the well-known belief-state reduction from POMDPs to MDPs to construct public belief state MDPs (PuB-MDPs).
We describe the composition of these reductions.

Let
\[
\langle \mathbb{A},
[\mathbb{O}_i], \mathbb{O}_{\text{pub}}, [\mathbb{H}_i], \mathbb{H}_{\text{pub}},
\mathbb{H},
\mu,
[\mathcal{O}_i], \mathcal{O}_{\text{pub}},
\mathcal{R}, \mathcal{T}, T \rangle,
\] 
be a finite-horizon common-payoff game.
Then we define the associated PuB-MDP as the following finite-horizon fully-observable sequential game
\[
\langle \tilde{\mathbb{A}},
\tilde{\mathbb{S}},
\tilde{s}^0,
 \tilde{\mathcal{R}}, \tilde{\mathcal{T}}, \tilde{T} \rangle,
\]
where
\begin{itemize}[leftmargin=*]
    \item $i = \iota = 0$.
    \item $\mathbb{\tilde{A}} = \{\tilde{a} \mid \tilde{a}  \colon \mathbb{H}_{\iota}(h_{\text{pub}}) \to \mathbb{A}, h_{\text{pub}} \in \mathbb{H}_{\text{pub}}\}$ is the set of \textit{prescriptions}.
    \item $\mathbb{\tilde{S}} = \cup_{h_{\text{pub}}} \Delta(\mathbb{H}(h_{\text{pub}})) $ is the set of public belief states (PBSs).
    \item $\tilde{s}^0 = \mu$ is the initial PBS.
    \item $\mathcal{\tilde{R}} \colon \tilde{s}, \tilde{a} \mapsto \mathbb{E}_{H \sim \tilde{s}} \mathcal{R}(H, \tilde{a}(H_{\iota}))$.
    \item $\mathcal{\tilde{T}}(\tilde{s}_{o^{t+1}_{\text{pub}}} \mid \tilde{s}^t, \tilde{a}) = 
    \mathbb{E}_{H^t \sim \tilde{s}^t} \mathcal{P}(o_{\text{pub}}^{t+1} \mid H^t, \tilde{a}(H_{\iota}^t))$\\
    where the PBS $\tilde{s}_{o^{t+1}_{\text{pub}}}$ is defined by\\
    $\tilde{s}_{o^{t+1}_{\text{pub}}}(h^{t+1}) {=} \mathbb{E}_{H^t \sim \tilde{s}^t} \mathcal{P}(h^{t+1} \mid H^t, \tilde{a}(H_{\iota}^t), o_{\text{pub}}^{t+1})$
    \item $\tilde{T} = T$.
\end{itemize}

\citet{nayyar} showed that optimal deterministic policies in the PuB-MDP correspond with optimal joint policies for the common payoff game.
Indeed, for the matching pennies game described in Figure \ref{fig:mp}, we can see that the PuB-MDP perspective resolves the backward dependence problem because the red player observes the blue player's prescription.
If the blue player's prescription maps to heads, the red player can determine that playing heads has a value of 1 whereas playing tails has a value of 0 (and vice versa if the blue player's prescriptions maps to tails).
Thus, the players can arrive at an optimal joint policy of the original game.

For a more detailed discussion on the PuB-MDP, see, e.g.,  \citep{capi_thesis,capi_paper}.

\section{The Public Belief Alternating Markov Game}

It is also possible to map 2p0s games to symmetric-information 2p0s games using policy announcements \citep{nayyar-zs,kartik2021upper}.\footnote{A symmetric-information game is one in which all players receive identical observations.}
This mapping can be chained with a belief-state transformation to construct games that we call public belief AMGs (PuB-AMGs) \citep{Wiggers2016StructureIT,nayyar-zs,brown2020rebel,buffet:hal-03080287,delage:hal-03523399,kartik2021upper}.\footnote{We note that details of the models studied in existing works sometimes differ from one another or from that of this work.}
In the main body below, we describe the composition of these reductions; we provide a brief discussion on these reductions as separate entities in Section \ref{sec:reduction} of the appendix.

Let
\[
\langle \mathbb{A},
[\mathbb{O}_i], \mathbb{O}_{\text{pub}}, [\mathbb{H}_i], \mathbb{H}_{\text{pub}},
\mathbb{H},
\mu,
[\mathcal{O}_i], \mathcal{O}_{\text{pub}},
[\mathcal{R}_i], \mathcal{T}, T \rangle,
\] 
be a finite-horizon 2p0s sequential game.
Then we define the associated PuB-AMG as the following finite-horizon fully-observable sequential game
\[
\langle \tilde{\mathbb{A}},
\tilde{\mathbb{S}},
\tilde{s}^0,
 \tilde{\mathcal{R}}, \tilde{\mathcal{T}}, \tilde{T} \rangle,
\]
where
\begin{itemize}[leftmargin=*]
    \item $i$ ranges from $0$ to $1$ and $\iota  \in \{0, 1\}$ is the acting player.
    \item $\mathbb{\tilde{A}} = \{\tilde{a} \mid \tilde{a} \colon \mathbb{H}_{\iota}(h_{\text{pub}}) \to \Delta(\mathbb{A}), h_{\text{pub}} \in \mathbb{H}_{\text{pub}}\}$ is the set of \textit{public decision rules} (or decision rules, for short).
    Note that public decision rules differ from prescriptions in that they map to a distributions over actions, rather than actions.
    \item $\mathbb{\tilde{S}} = \cup_{h_{\text{pub}}} \Delta(\mathbb{H}(h_{\text{pub}})) $ is the set of public belief states (PBSs).
    \item $\tilde{s}^0 = \mu$ is the initial PBS.
    \item $\mathcal{\tilde{R}} \colon \tilde{s}, \tilde{a} \mapsto \mathbb{E}_{H \sim \tilde{s}} \mathbb{E}_{A \sim \tilde{a}(H_{\iota})} \mathcal{R}(H, A)$.
    \item $\mathcal{\tilde{T}}(\tilde{s}_{o^{t+1}_{\text{pub}}} \mid \tilde{s}^t, \tilde{a}) = 
    \mathbb{E}_{H^t \sim \tilde{s}^t}  \mathbb{E}_{A^t \sim \tilde{a}(H^t_{\iota})} \mathcal{P}(o_{\text{pub}}^{t+1} \mid H^t, A^t)$\\
    where the PBS $\tilde{s}_{o^{t+1}_{\text{pub}}}$ is defined by\\
    $\tilde{s}_{o^{t+1}_{\text{pub}}}(h^{t+1}) {=} \mathbb{E}_{H^t \sim \tilde{s}^t} \mathbb{E}_{A^t \sim \tilde{a}(H^t_{\iota})} \mathcal{P}(h^{t+1} {\mid} H^t, A^t, o_{\text{pub}}^{t+1})$.
    \item $\tilde{T}=T$.
\end{itemize}

Notice that the PuB-AMG closely resembles the PuB-MDP in structure, differing only in the number of players, the structure of the actions, and that an additional expectation is required in the reward and transition functions.

\paragraph{The Correspondence Mapping} As mentioned earlier, a Nash equilibrium in the PuB-AMG may be undesirable because it does not necessarily correspond to a Nash equilibrium in the original game.
Here, we make this notion precise by defining a correspondence function $\Pi^{\downarrow}$ that maps public belief joint policies to joint policies of the original game.
Given a PuB-AMG joint policy $\tilde{\pi}$, $\Pi^{\downarrow}(\tilde{\pi})$ is the joint policy that, for each AOH $h_{\iota}$, plays actions with the probability that $\tilde{\pi}$ would at $h_{\iota}$, assuming that $h_{\iota}$ was reached using $\tilde{\pi}$.
(See Section \ref{sec:def} for a more rigorous definition.)
\emph{Importantly, $\Pi^{\downarrow}(\tilde{\pi})_i$ can be implemented in practice by running $\tilde{\pi}_i$ under the assumption that the opponent is playing according to~$\tilde{\pi}_{-i}$.}

\begin{figure}
  \begin{center}
    \includegraphics[width=0.48\textwidth]{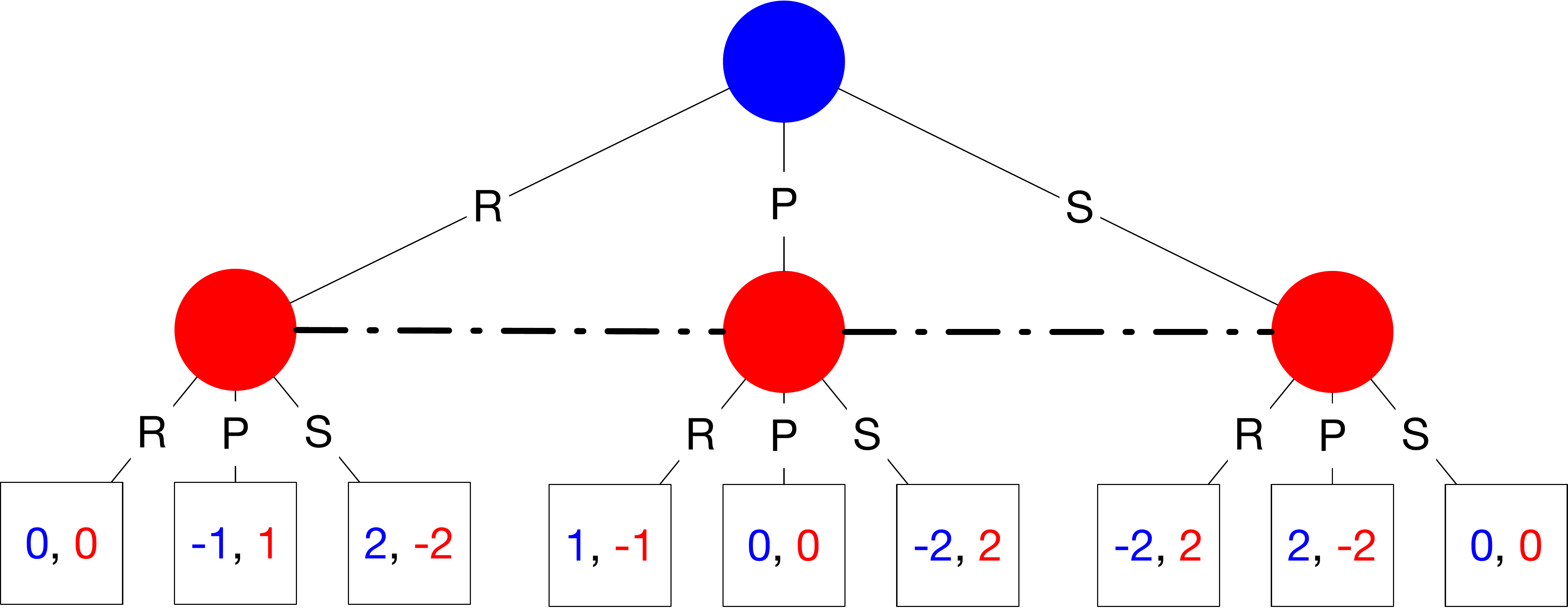}
  \end{center}
  \caption{A perturbed variant of rock-paper-scissors.}
  \label{fig:pert_rps_game}
\end{figure}

\paragraph{The Non-Correspondence Problem} We can now discuss non correspondence more rigorously.
To illustrate, we show the perturbed variant of rock-paper-scissors described in \citep{brown2020rebel} in Figure~\ref{fig:pert_rps_game}.
The game is perturbed in the sense that the payouts are doubled if either player plays scissors.
The unique Nash equilibrium of the game is $(R, P, S) \mapsto (0.4, 0.4, 0.2)$.

Similarly to before, the red player can compute the associated value for each of the blue player's decision rules in the PuB-AMG.
Thus, the blue player can determine that the Nash equilibrium policy maximizes its value.
It is at this point that the non-correspondence problem becomes apparent.
Because the red player is conditioning on the blue player's decision rule, it achieves the optimal value by playing any best response to the blue player.
In the perturbed rock-paper-scissors game, all policies are best responses to the Nash equilibrium.
Thus, there is nothing constraining the red player to the Nash equilibrium policy of the original game.

A similar argument, detailed in Section \ref{sec:worst_case}, leads to the following disappointing result.
\begin{proposition} \label{prop:worst_case}
 A PuB-AMG Nash equilibrium $\tilde{\pi}$ may correspond with a joint policy $\Pi^{\downarrow}(\tilde{\pi})$ that is maximally exploitable.
\end{proposition}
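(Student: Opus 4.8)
The plan is to prove the statement by exhibiting an explicit counterexample, building directly on the perturbed rock-paper-scissors game of Figure~\ref{fig:pert_rps_game}. I would model this game as a finite-horizon 2p0s sequential game in which the blue player (player $0$, the maximizer) acts first and the red player (player $1$) acts second without observing blue's action, so that both AOHs are trivial and simultaneous play is encoded sequentially. First I would fix blue's payoff matrix $M$ (rows and columns ordered $R, P, S$),
\[
M = \begin{pmatrix} 0 & -1 & 2 \\ 1 & 0 & -2 \\ -2 & 2 & 0 \end{pmatrix},
\]
and record the two facts that drive the whole argument: (i) the unique Nash equilibrium is $(0.4, 0.4, 0.2)$ for both players with value $0$ (already stated in the excerpt), and (ii) when blue plays this equilibrium decision rule, blue's expected payoff equals $0$ against \emph{every} red action, so the second mover is indifferent among all of its responses.

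Next I would construct the PuB-AMG and exhibit a specific Nash equilibrium $\tilde{\pi}$. Blue moves first and publicly announces a decision rule $\tilde{a}_0 \in \Delta(\mathbb{A})$, which becomes the public belief state; red then observes this state and responds. I would set $\tilde{\pi}_0 = (0.4, 0.4, 0.2)$ and let $\tilde{\pi}_1$ be a best-response function that, at every announced decision rule, plays a minimizer of blue's payoff, breaking ties at the on-path state $(0.4, 0.4, 0.2)$ in favor of the pure strategy $S$ (always scissors). To confirm this is a Nash equilibrium I would check both incentive constraints: red is optimal everywhere by construction, and blue's value from announcing any $\tilde{a}_0$ equals $\min_{\tilde{a}_1} M(\tilde{a}_0, \tilde{a}_1)$ because red best responds, a quantity maximized precisely at the minimax strategy $(0.4, 0.4, 0.2)$; hence blue cannot profitably deviate.

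Finally I would push $\tilde{\pi}$ through the correspondence map. Because both AOHs are trivial in this game, $\Pi^{\downarrow}(\tilde{\pi})$ simply plays blue $=(0.4, 0.4, 0.2)$ and red $=$ always scissors. Blue's best response to red's always-scissors is to play $R$, attaining payoff $2$, the extreme of the game's reward range; equivalently, red is driven to its worst possible value. Thus $\Pi^{\downarrow}(\tilde{\pi})$ is maximally exploitable, establishing the proposition. The main obstacle, and the only genuinely delicate step, is verifying that $\tilde{\pi}$ is truly a Nash equilibrium despite red playing a badly exploitable strategy: the resolution is exactly the indifference property (ii), which licenses red to select the extreme-inducing action while remaining a best response, so that blue---facing an opponent that best responds everywhere---can still do no better than its minimax value. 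I would also pin down precisely what \emph{maximally exploitable} means (a best-responding opponent attains the extreme of the reward range, as formalized in Section~\ref{sec:worst_case}) and note that the tie-breaking toward $S$ (or symmetrically $P$) is exactly what selects the maximally damaging corresponded policy.
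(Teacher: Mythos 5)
Your construction is correct as far as it goes, and it is a valid demonstration of the non-correspondence problem: the announced decision rule $(0.4,0.4,0.2)$ leaves the second mover indifferent, the tie-break toward always-scissors is a legitimate best response everywhere, and the corresponded red policy is exploited for the game's maximum payoff of $2$. But it does not establish the proposition as stated, because the \emph{joint} policy $\Pi^{\downarrow}(\tilde{\pi})$ is not maximally exploitable. Under the paper's definition, $\text{expl}(\pi) = \tfrac{1}{2}\bigl(-\min_{\pi_1'} \mathfrak{J}(\pi_0, \pi_1') + \max_{\pi_0'} \mathfrak{J}(\pi_0', \pi_1)\bigr)$, and in your example the first term is $0$ because blue's corresponded policy is exactly the maximin strategy of the original game and hence unexploitable. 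You get $\text{expl} = \tfrac{0+2}{2} = 1$, whereas perturbed RPS admits joint policies with exploitability $2$ (e.g., blue always $P$, red always $S$). Only one of the two players is damaged by the correspondence.

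This shortfall is structural, not an artifact of your tie-breaking: in any PuB-AMG, the root decision rule of an equilibrium always corresponds to part of a Nash equilibrium of the original game (this is exactly Lemma \ref{lem:first}, which requires no regularization). So in a one-shot simultaneous game encoded as two sequential moves, the first mover can never be rendered exploitable by $\Pi^{\downarrow}$, and no such example can witness maximal exploitability of the joint policy. The paper's proof instead builds a three-stage ``rigged adversarial matching pennies'' game in which the first mover also acts again at the end: its root choice (making the game fair) is equilibrium-consistent, but its final tie-break and the second mover's on-path mixing are both fully exploitable, so a best response to \emph{each} player's corresponded policy attains the extreme of the payoff range. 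To repair your argument you would either need to weaken the claim to ``exploitable by a margin equal to the game's maximum payoff against one player,'' or move to a multi-stage construction of the paper's type.
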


At an intuitive level, the non-correspondence problem arises because there is an important distinction between the public belief game and the original game.
Specifically, in the public belief game, players acting earlier are forced to reveal their decision rules to players acting later.
As a result, later acting players are able to ``slack off'' without losing any value because the earlier acting players cannot deviate to punish them.
In common-payoff games, this is a non issue because the interests of every player are aligned.
However, in 2p0s games, where there are adversarial interests, this distinction changes the strategic nature of the game in a more fundamental sense.

\section{Uniqueness-Guaranteeing Objectives}

\looseness=-1
To address the non-correspondence problem discussed in the previous section, we introduce a class of objectives that we call uniqueness guaranteeing (UG).
UG objectives are a kind of regularized objective, of the form defined below, that generalize the expected return objective in that it includes objectives that may have dependence on policies beyond the actions they select.

\begin{definition} \label{def:objective}
We use the term regularized minimax objective (or objective for short) to refer to mappings of the form
\[
\mathfrak{J} \colon \pi_0, \pi_1 \mapsto \mathbb{E} \left[\sum_{t=0}^{T-1} \mathfrak{R}(H^t, A^t, \pi(H^t_{\iota})) \mid \pi_0, \pi_1\right]
\]
where $\mathfrak{R}$ is a real-valued function.
Every regularized minimax objectives $\mathfrak{J}$  possesses a PuB-AMG analog $\mathfrak{\tilde{J}}$ that is equivalent to $\mathfrak{J}$ in the sense that $\mathfrak{\tilde{J}}(\tilde{\pi}) = \mathfrak{J}(\Pi^{\downarrow}(\tilde{\pi}))$ that is defined by
\[\mathfrak{\tilde{J}} \colon \tilde{\pi}_0, \tilde{\pi}_1 \mapsto \mathbb{E} \left[\sum_{t=0}^{T-1} \mathfrak{\tilde{R}}(\tilde{S}^t, \tilde{A}^t) \mid \tilde{\pi}_0, \tilde{\pi}_1\right],\]
where
\[\mathfrak{\tilde{R}} \colon (\tilde{s}, \tilde{a}) \mapsto \mathbb{E}_{H \sim \tilde{s}} \mathbb{E}_{A \sim \tilde{a}(H_{\iota})} \mathfrak{R}(H, A, \tilde{a}(H_{\iota})).\]
\end{definition}

UG objectives are regularized minimax objectives that are guaranteed to produce unique equilibria.
\begin{definition} \label{def:unique}
For a particular game, we say a minimax objective $\mathfrak{J}$ is UG if
\[
\max_{\pi_0} \min_{\pi_1} \mathfrak{J}(\pi_0, \pi_1)
\]
is guaranteed to have a unique solution $\pi_{\ast}$ for every subgame of that game.
\end{definition}

\subsection{Correspondence of Uniqueness-Guaranteeing Equilibria}
We can now state our main result---that the non-correspondence problem does not exist for equilibria induced by UG objectives.

\begin{theorem} \label{thm:correspondence}
If $\tilde{\pi}$ is an equilibrium of a PuB-AMG under a UG objective, then its corresponding joint policy $\Pi^{\downarrow}(\tilde{\pi})$ is the equilibrium in the original game under the same UG objective.
\end{theorem}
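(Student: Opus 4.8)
The plan is to prove a stronger, recursive statement and then specialize it to the root $\tilde{s}^0 = \mu$: for every public belief state $\tilde{s}$ reachable under $\tilde{\pi}$, the subgame correspondence of $\tilde{\pi}$ equals the unique equilibrium $\pi_{\ast}^{\tilde{s}}$ of the original subgame rooted at $\tilde{s}$. The starting observation is that a PBS is literally an initial history distribution in $\Delta(\mathbb{H}(h_{\text{pub}}))$, so PBSs are in bijection with subgames of the original game; by the UG hypothesis each such subgame has a unique solution, whose value I denote $V(\tilde{s}) = \max_{\pi_0}\min_{\pi_1}\mathfrak{J}_{\tilde{s}}$. Since the PuB-AMG is finite-horizon and fully observable, I would induct backward on the time index over its states, carrying the invariant $\Pi^{\downarrow}_{\tilde{s}}(\tilde{\pi}) = \pi_{\ast}^{\tilde{s}}$ (with $\Pi^{\downarrow}_{\tilde{s}}$ the correspondence defined exactly as $\Pi^{\downarrow}$ but relative to $\tilde{s}$). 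Because UG guarantees uniqueness, it suffices to show the corresponded profile is a saddle point; uniqueness then upgrades ``a saddle point'' to ``the equilibrium.''

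The first ingredient is value preservation plus a Bellman recursion: writing $\tilde{V}(\tilde{s})$ for the PuB-AMG value at $\tilde{s}$ and $\iota$ for the player acting at $\tilde{s}$, I would show $\tilde{V}(\tilde{s}) = V(\tilde{s})$ and $V(\tilde{s}) = \mathrm{opt}_{\tilde{a}}\,[\, \tilde{\mathfrak{R}}(\tilde{s}, \tilde{a}) + \mathbb{E}_{o_{\text{pub}} \mid \tilde{a}} V(\tilde{s}_{o_{\text{pub}}}) \,]$, where $\mathrm{opt}$ is a max if $\iota$ is the maximizer and a min otherwise. This follows by backward induction from the additivity of the objective in Definition \ref{def:objective}: the first-step reward depends only on $\tilde{s}$ and the acting player's decision rule, and the continuation objective separates over public observations into independent subtrees. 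Since the two players' decision variables in distinct subtrees are independent, the inner $\max$--$\min$ distributes across the sum and reduces to per-subtree values; to place the acting player's decision rule outermost I invoke the minimax theorem, which is available because UG objectives are convex--concave (the very property underlying the uniqueness they guarantee). Note this step does not use uniqueness, so the game value is always preserved.

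The heart of the argument is the inductive step for the policy invariant. Fix $\tilde{s}$ with, say, the maximizer acting, let $\hat{a} = \tilde{\pi}(\tilde{s})$, and let $\sigma = \Pi^{\downarrow}_{\tilde{s}}(\tilde{\pi})$, so $\sigma$ plays $\hat{a}$ at the root and, by the inductive hypothesis at each on-path child $\tilde{s}_{o_{\text{pub}}}$, plays $\pi_{\ast}^{\tilde{s}_{o_{\text{pub}}}}$ thereafter. I first show the maximizer's component \emph{defends} the value, $\min_{\pi_1} \mathfrak{J}_{\tilde{s}}(\sigma_0, \pi_1) = V(\tilde{s})$: expanding one step, the reward and the child beliefs depend only on $\hat{a}$ (the maximizer acts, so the minimizer's deviations cannot move the one-step belief update), and by the inductive hypothesis each child continuation is the \emph{equilibrium} $\pi_{\ast}^{\tilde{s}_{o_{\text{pub}}}}$, which globally defends $V(\tilde{s}_{o_{\text{pub}}})$ against every minimizer continuation; summing and using the Bellman recursion with $\hat{a}$ optimal gives $V(\tilde{s})$. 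Hence $\sigma_0$ attains $\max_{\pi_0}\min_{\pi_1}\mathfrak{J}_{\tilde{s}}$, and by UG uniqueness of the maximin solution $\sigma_0 = \pi_{\ast,0}^{\tilde{s}}$. The minimizer's component is then pinned by subgame perfection of the unique equilibrium: since $\sigma_0 = \pi_{\ast,0}^{\tilde{s}}$, the on-path children of $\sigma$ coincide with those of $\pi_{\ast}^{\tilde{s}}$, and the restriction of $\pi_{\ast}^{\tilde{s}}$ to a positively-reached continuation is that continuation's unique equilibrium, so the inductive hypothesis forces $\sigma_1 = \pi_{\ast,1}^{\tilde{s}}$. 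The case where the minimizer acts is symmetric, and evaluating the invariant at $\mu$ gives the theorem.

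I expect the main obstacle to be the pinning of the \emph{non-acting} player, which is exactly where the non-correspondence problem is defeated. For the acting player the defense argument works even without uniqueness (an acting player's corresponded policy is unexploitable), mirroring why a maximin strategy corresponds correctly even under the expected-return objective; the difficulty is the other player, who does not move at $\tilde{s}$ and whose corresponded behavior is fixed by the continuations. Under a plain return objective the responder's continuation best responses are non-unique, which is precisely the ``slack off'' freedom exploited in Proposition \ref{prop:worst_case}; UG removes it by making each continuation equilibrium unique, so that once the acting player is pinned, subgame perfection leaves the responder no choice but the true equilibrium strategy. A secondary point requiring care is justifying the minimax swap in the value-preservation step from the convexity that defines the UG class rather than assuming it.
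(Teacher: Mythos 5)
Your proof is essentially sound and reaches the theorem by a genuinely different route from the paper's. The paper argues by \emph{forward} induction: Lemma \ref{lem:first} shows (via value preservation, Corollary \ref{cor:value}) that the first decision rule of any PuB-AMG equilibrium is part of an equilibrium of the original game, and Lemma \ref{lem:ext} shows that the restriction of the unique equilibrium to any positively-reached subgame is that subgame's unique equilibrium; chaining these down the game tree gives the result. You instead run a \emph{backward} induction over reachable PBSs with a Bellman recursion for the common value $V(\tilde s)=\tilde V(\tilde s)$, show the acting player's corresponded strategy defends $V(\tilde s)$ so that uniqueness pins it, and then pin the non-acting player through the inductive hypothesis at the children together with the same subgame-restriction fact. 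The two load-bearing ingredients are identical---value preservation and the on-path restriction property of unique equilibria (your ``subgame perfection of the unique equilibrium'' is exactly Lemma \ref{lem:ext})---but your decomposition localizes everything to a one-step Bellman equation, which makes especially transparent where uniqueness defeats non-correspondence (pinning the responder), at the cost of more setup for the recursion.

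Two points need repair. First, UG objectives are \emph{not} defined by convexity--concavity: Definition \ref{def:unique} postulates only uniqueness of the max--min solution in every subgame, so you cannot invoke the minimax theorem ``because UG objectives are convex--concave.'' The swap you need (placing the minimizer's decision rule outermost when the minimizer acts) instead follows because UG objectives have a well-defined value in every subgame---both players can guarantee the unique equilibrium value (Remark \ref{lem:game_val})---which is also how the paper obtains Corollary \ref{cor:value}. Second, your inductive step tacitly assumes both that $\hat a=\tilde\pi(\tilde s)$ is Bellman-optimal at every reachable $\tilde s$ and that the inductive hypothesis applies at the children; both require that the restriction of $\tilde\pi$ to any positively-reached PuB-AMG subgame is itself an equilibrium of that subgame. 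This is true and follows from exactly the deviation argument of Lemma \ref{lem:ext} (applied in the PuB-AMG rather than the original game), but it must be stated, since Nash equilibria of perfect-information games need not be subgame perfect off the equilibrium path.
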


\begin{proof}
(Sketch) The first decision rule of any PuB-AMG equilibrium must correspond to the first decision rule of an equilibrium of the original game.
Furthermore, if the objective is UG, subgame equilibria must be restrictions of the equilibrium of the whole game.
Thus, by forward induction, PuB-AMG equilibria must correspond to the equilibrium of the original game.
\end{proof}
We detail the proof for Theorem \ref{thm:correspondence} in Section \ref{sec:cor}.
Due to recent work, Theorem \ref{thm:correspondence} can be generalized to the entire class of 2t0s games.
\begin{corollary} \label{cor:2t02}
Computing team-correlated equilibria of 2t0s games under UG objectives can be reduced to computing an equilibrium of a PuB-AMG under the same UG objectives.
\end{corollary}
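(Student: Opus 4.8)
The plan is to chain three reductions and conclude by transitivity. First I would invoke the recent results of \citet{carminati1,Zhang2022TeamBD,carminati2}, which show that any 2t0s game can be recast as a 2p0s game whose two ``players'' are team coordinators that issue correlated recommendations to their respective teammates. The crucial feature of this casting is that it is equilibrium-preserving: team-correlated equilibria of the original 2t0s game are in correspondence with the Nash equilibria of the resulting coordinator 2p0s game. Before using it, I would verify that this correspondence is a genuine reduction in our sense---that the coordinator game fits the finite-horizon sequential game formalism of Section 2 (with its public-observation and AOH structure) and that subgames of the 2t0s game map to subgames of the associated 2p0s game, so that the subgame-level hypotheses of our definitions remain meaningful after the transformation.

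Second, having obtained a 2p0s game, I would apply the PuB-AMG construction of Section 4 to it and invoke Theorem \ref{thm:correspondence}: under a UG objective, an equilibrium $\tilde{\pi}$ of the PuB-AMG corresponds via $\Pi^{\downarrow}$ to the unique equilibrium of the coordinator 2p0s game under the same objective. Composing the two reductions then yields the corollary directly---to compute a team-correlated equilibrium of the 2t0s game under a UG objective, it suffices to compute the equilibrium of the PuB-AMG associated with the coordinator game under that same UG objective, and then map the solution back, first through $\Pi^{\downarrow}$ and then through the 2t0s-to-2p0s correspondence. The phrase ``under the same UG objectives'' in the statement is exactly what makes this composition clean: the objective is fixed once at the top and simply carried through each arrow.

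The step I expect to be the main obstacle is verifying that the UG hypothesis transfers cleanly across the first reduction. Definition \ref{def:unique} requires the regularized minimax objective to produce a \emph{unique} solution in \emph{every} subgame; I would therefore need to confirm that regularizing at the level of the coordinators' correlated decision rules yields a valid regularized minimax objective in the sense of Definition \ref{def:objective}, and that its uniqueness guarantee holds in every subgame of the coordinator game rather than only at the root. Since uniqueness can be enforced with an arbitrarily small amount of entropy regularization \citep{fforel}, I expect this to go through by adding such regularization over the coordinators' strategy spaces; the delicate bookkeeping is ensuring that the correlation device does not reintroduce degeneracies that would break uniqueness in some subgame. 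Once that is secured, the remaining steps are a direct appeal to Theorem \ref{thm:correspondence} and the transitivity of reductions.
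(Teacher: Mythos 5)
Your proposal matches the paper's own argument, which is exactly the one-line composition you describe: invoke the 2t0s-to-2p0s reduction of \citet{carminati1,Zhang2022TeamBD,carminati2} via intra-team policy announcements, then apply Theorem \ref{thm:correspondence} to the resulting 2p0s game. The additional verification you flag (that the coordinator game fits the sequential-game formalism and that the UG property is stated for every subgame of the transformed game) is sensible diligence that the paper leaves implicit, but it does not change the route.
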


This follows from combining the results of \citet{carminati1,Zhang2022TeamBD,carminati2}, who provide a reduction from 2t0s games to 2p0s games via intra-team public policy announcements, with Theorem \ref{thm:correspondence}.

\subsection{Continuity in the PuB-AMG with Uniqueness-Guaranteeing Objectives}

Theorem \ref{thm:correspondence} shows that UG equilibria do not suffer from the non-correspondence problem, meaning that computing UG equilibria in the PuB-AMG induces equilibria in the original game.
Here, we show that these equilibria are also continuous, a desirable condition for amenability to function approximation, under the mild assumption that $\mathfrak{R}$ is continuous.

\begin{definition}
In a perfect information game, a subgame perfect equilibrium is an equilibrium whose restriction to any subgame is an equilibrium of that subgame.
\end{definition}

\begin{definition}
The PuB-AMG subgame perfect equilibrium value function is defined by \[\tilde{v}_{\ast} \colon \tilde{s}^t \mapsto  \max_{\tilde{\pi}_0} \min_{\tilde{\pi}_1} \mathbb{E}
 \left[\sum_{t'=t}^{T-1} \mathfrak{\tilde{R}}(\tilde{S}^{t'}, \tilde{A}^{t'}) \mid \tilde{\pi}_{0}, \tilde{\pi}_1, \tilde{S}^t=\tilde{s}^t \right].\]
\end{definition}

\begin{theorem} \label{thm:val_cont}
Let $\mathfrak{J}$ guarantee the existence of an equilibrium in all subgames. Then the PuB-AMG subgame perfect equilibrium value function is a continuous function from the space of PBSs to real values.
\end{theorem}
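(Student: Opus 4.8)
The plan is to prove continuity by backward induction on the time step $t$, reducing the claim to the continuity of a single Bellman backup. Because the PuB-AMG is an alternating Markov game whose acting player $\iota$ is determined by the public history underlying a PBS, every PBS in a fixed simplex $\Delta(\mathbb{H}(h_{\text{pub}}))$ shares the same acting player, so within such a simplex the subgame-perfect value satisfies a pure $\max$ (if $\iota = 0$) or pure $\min$ (if $\iota = 1$) recursion
\[
\tilde{v}_{\ast}(\tilde{s}^t) = \mathop{\mathrm{opt}}_{\tilde{a} \in \tilde{\mathbb{A}}} \tilde{Q}(\tilde{s}^t, \tilde{a}), \quad \tilde{Q}(\tilde{s}^t, \tilde{a}) := \mathfrak{\tilde{R}}(\tilde{s}^t, \tilde{a}) + \sum_{o_{\text{pub}}^{t+1}} \tilde{\mathcal{T}}(\tilde{s}_{o^{t+1}_{\text{pub}}} \mid \tilde{s}^t, \tilde{a})\, \tilde{v}_{\ast}(\tilde{s}_{o^{t+1}_{\text{pub}}}),
\]
where $\mathrm{opt}$ is $\max$ or $\min$. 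The hypothesis that $\mathfrak{J}$ admits an equilibrium in every subgame guarantees that this recursion is well defined and that the optimum is attained. The base case $t = T$ is immediate, since the value is the empty sum $0$; so it suffices to show that if $\tilde{v}_{\ast}$ is continuous (hence, on the compact simplex, bounded by some $M$) on all PBSs at step $t+1$, then $\tilde{v}_{\ast}$ is continuous at step $t$.

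The crux is to show that $\tilde{Q}$ is jointly continuous in $(\tilde{s}^t, \tilde{a})$. The reward term is unproblematic: $\mathfrak{\tilde{R}}(\tilde{s}^t, \tilde{a}) = \mathbb{E}_{H \sim \tilde{s}^t} \mathbb{E}_{A \sim \tilde{a}(H_{\iota})} \mathfrak{R}(H, A, \tilde{a}(H_{\iota}))$ is linear in $\tilde{s}^t$ and, by the assumed continuity of $\mathfrak{R}$, continuous in $\tilde{a}$, hence jointly continuous. The difficulty lies in the continuation terms: the updated PBS $\tilde{s}_{o^{t+1}_{\text{pub}}}$ is obtained by a Bayesian normalization whose denominator is exactly $\tilde{\mathcal{T}}(\tilde{s}_{o^{t+1}_{\text{pub}}} \mid \tilde{s}^t, \tilde{a})$, so the map $(\tilde{s}^t, \tilde{a}) \mapsto \tilde{s}_{o^{t+1}_{\text{pub}}}$ is discontinuous (indeed undefined) at pairs where a public observation has probability zero. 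Reconciling this discontinuity with the desired continuity of $\tilde{Q}$ is the main obstacle.

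The resolution is that each discontinuity is masked by the transition weight it multiplies. Writing the unnormalized belief $\hat{b}_{o}(h^{t+1}) = \mathbb{E}_{H^t \sim \tilde{s}^t} \mathbb{E}_{A^t \sim \tilde{a}(H^t_{\iota})} \mathcal{P}(h^{t+1}, o_{\text{pub}}^{t+1} \mid H^t, A^t)$, the weight is $p_{o} = \sum_{h^{t+1}} \hat{b}_{o}(h^{t+1})$ and the updated PBS is $\hat{b}_{o}/p_{o}$ whenever $p_{o} > 0$. Both $\hat{b}_{o}$ and $p_{o}$ are linear in $\tilde{s}^t$ and continuous in $\tilde{a}$. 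Hence each summand $g_{o}(\tilde{s}^t, \tilde{a}) := p_{o}\, \tilde{v}_{\ast}(\hat{b}_{o}/p_{o})$ is continuous wherever $p_{o} > 0$ (a continuous weight times the composition of the continuous $\tilde{v}_{\ast}$ with a ratio whose denominator is nonzero), while at points where $p_{o} = 0$ one has $|g_{o}| \le p_{o} M \to 0$ by the inductive boundedness, so $g_{o}$ extends continuously by the value $0$ there. Thus $\tilde{Q} = \mathfrak{\tilde{R}} + \sum_{o} g_{o}$ is jointly continuous on $\Delta(\mathbb{H}(h_{\text{pub}})) \times \tilde{\mathbb{A}}$.

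Finally, I would invoke Berge's maximum theorem. The decision-rule set $\tilde{\mathbb{A}}$, a finite product of probability simplices $\Delta(\mathbb{A})$ over the finitely many AOHs in $\mathbb{H}_{\iota}(h_{\text{pub}})$, is compact and does not vary with $\tilde{s}^t$; combined with the joint continuity of $\tilde{Q}$ just established, Berge's theorem gives that $\tilde{s}^t \mapsto \mathop{\mathrm{opt}}_{\tilde{a}} \tilde{Q}(\tilde{s}^t, \tilde{a})$ is continuous on the simplex, completing the induction. The only genuinely delicate step is the masking argument of the previous paragraph; everything else is bookkeeping, and the two hypotheses enter precisely to (i) make $\mathfrak{\tilde{R}}$ continuous and (ii) guarantee that the value recursion is well defined with attained optima at every step.
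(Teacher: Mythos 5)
Your proof is correct, but it takes a genuinely different route from the paper's. The paper argues globally rather than by backward induction: for a fixed joint policy $\pi$ of the subgame, the value $v_{\pi}(b)=\sum_h b(h)v_{\pi}(h)$ is \emph{linear} in the belief and uniformly bounded by $\mathfrak{M}$, so beliefs close in total variation give uniformly close payoff functions over all $(\pi_0,\pi_1)$; a short lemma (Lemma \ref{lem:mm}) then shows that max-min values of uniformly $\epsilon$-close functions are $\epsilon$-close, yielding Lipschitz continuity of $\tilde{v}_{\ast}$ in one step, with no Bellman recursion, no Berge, and no need to touch the Bayes update at all. That argument needs only boundedness of the regularized rewards, whereas yours additionally uses continuity of $\mathfrak{R}$ in the decision rule --- an assumption that is a standing hypothesis of the subsection but not literally part of the statement of Theorem \ref{thm:val_cont} (note the paper's remark that the theorem holds even for non-UG objectives, and its proof never invokes continuity of $\mathfrak{R}$). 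What your approach buys in exchange is real: the masking argument showing that $p_{o}\,\tilde{v}_{\ast}(\hat{b}_{o}/p_{o})$ extends continuously across the set $\{p_{o}=0\}$ establishes joint continuity of the Bellman operator $\tilde{Q}$, which is precisely the ingredient the paper's proof of Theorem \ref{thm:pol_cont} asserts with only the phrase ``$\tilde{\mathcal{T}}$ is continuous by construction''; your induction supplies the detail that justifies that step. One small point to tighten: the identification of the max-min value with the pure $\max$/$\min$ one-step recursion at each PBS deserves an explicit sentence (it follows by backward induction in a perfect-information alternating game once subgame equilibria exist), since the paper's definition of $\tilde{v}_{\ast}$ is stated as a max-min over full policies rather than as a recursion.
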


Note that Theorem \ref{thm:val_cont} holds even if $\mathfrak{J}$ is not UG.

\begin{theorem} \label{thm:pol_cont}
Let $\mathfrak{J}$ be a UG objective be induced by a continuous $\mathfrak{R}$.
Then the PuB-AMG subgame perfect equilibrium induced by $\mathfrak{J}$ is a continuous function from the space of PBSs to the space of public decision rules.
\end{theorem}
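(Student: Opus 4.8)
The plan is to treat the equilibrium decision rule at a PBS as the solution of a parametric optimization problem and invoke Berge's maximum theorem, feeding it the continuity of the value function from Theorem~\ref{thm:val_cont} and the uniqueness guaranteed by Definition~\ref{def:unique}. Throughout I would fix a public history $h_{\text{pub}}$ and argue within the corresponding stratum $\Delta(\mathbb{H}(h_{\text{pub}}))$ of the PBS space. Since a full history determines its public history, distinct strata occupy disjoint coordinate subspaces and are topologically separated, so continuity on all of $\tilde{\mathbb{S}}$ reduces to continuity on each stratum. The payoff of working stratum-by-stratum is that the set of admissible decision rules $C_{h_{\text{pub}}} = \{\tilde{a} \colon \mathbb{H}_{\iota}(h_{\text{pub}}) \to \Delta(\mathbb{A})\}$ is a fixed, compact product of simplices, independent of $\tilde{s}$; thus the action correspondence is constant, and the only $\tilde{s}$-dependence lives in the objective. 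Because Theorem~\ref{thm:val_cont} already supplies a continuous $\tilde{v}_{\ast}$ on the (compact) PBS space, no fresh backward induction is needed: the subgame perfect decision rule at $\tilde{s}$ is exactly the optimizer of the one-step objective
\[
\tilde{q}_{\ast}(\tilde{s}, \tilde{a}) = \mathfrak{\tilde{R}}(\tilde{s}, \tilde{a}) + \sum_{o_{\text{pub}}} \mathcal{\tilde{T}}(\tilde{s}_{o_{\text{pub}}} \mid \tilde{s}, \tilde{a})\, \tilde{v}_{\ast}(\tilde{s}_{o_{\text{pub}}}),
\]
maximized over $\tilde{a}$ when $\iota = 0$ and minimized when $\iota = 1$.

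First I would establish that $\tilde{q}_{\ast}$ is jointly continuous in $(\tilde{s}, \tilde{a})$. The reward term $\mathfrak{\tilde{R}}(\tilde{s}, \tilde{a}) = \mathbb{E}_{H \sim \tilde{s}} \mathbb{E}_{A \sim \tilde{a}(H_{\iota})} \mathfrak{R}(H, A, \tilde{a}(H_{\iota}))$ is a finite weighted sum of evaluations of the continuous $\mathfrak{R}$, hence continuous. The continuation term is the delicate piece, because the posterior PBS $\tilde{s}_{o_{\text{pub}}}$ involves a Bayesian normalization that is undefined when the observation has probability zero. The plan is to sidestep this with the unnormalized posterior $\beta_{o}(h') = \sum_{h} \tilde{s}(h) \sum_{a} \tilde{a}(h_{\iota})(a)\, \mathcal{P}(h', o \mid h, a)$, which is a finite sum of products of continuous quantities and therefore jointly continuous; one then notes $\mathcal{\tilde{T}}(\tilde{s}_{o} \mid \tilde{s}, \tilde{a}) = \|\beta_{o}\|_{1}$ and $\tilde{s}_{o} = \beta_{o} / \|\beta_{o}\|_{1}$ on $\{\|\beta_{o}\|_{1} > 0\}$. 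The per-observation contribution $\|\beta_{o}\|_{1}\, \tilde{v}_{\ast}(\beta_{o}/\|\beta_{o}\|_{1})$ is continuous wherever $\|\beta_{o}\|_{1} > 0$, and since $\tilde{v}_{\ast}$ is bounded by some $M$ on the compact PBS space, this product is dominated by $M\|\beta_{o}\|_{1}$ and hence extends continuously (to the value $0$) across the set $\{\|\beta_{o}\|_{1} = 0\}$. Summing over the finitely many public observations gives joint continuity of $\tilde{q}_{\ast}$.

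Next I would argue uniqueness of the optimizer and conclude. By Definition~\ref{def:unique}, the subgame rooted at $\tilde{s}$ has a unique max--min solution; in the perfect-information PuB-AMG this solution coincides with the subgame perfect equilibrium, whose root decision rule is precisely an optimizer of $\tilde{q}_{\ast}(\tilde{s}, \cdot)$. Two optimizers differing on the support of $\tilde{s}$ would, via $\Pi^{\downarrow}$ and Theorem~\ref{thm:correspondence}, induce two distinct equilibria of the original subgame, contradicting UG uniqueness; hence the optimizer is pinned down on $\operatorname{supp}(\tilde{s})$. (Off-support, $\tilde{q}_{\ast}$ does not depend on $\tilde{a}$, so the clean single-valued statement is obtained on the full-support PBSs forming the relative interior of each stratum---where reachable PBSs lie under the canonical entropy-regularized UG objective---or equivalently by restricting the codomain to on-support decision rules.) With a jointly continuous objective, a constant compact action set, and a single-valued optimizer, Berge's maximum theorem yields that the optimizer correspondence is upper hemicontinuous and single-valued, hence a continuous function of $\tilde{s}$; this is exactly the asserted continuity of the PuB-AMG subgame perfect equilibrium.

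I expect the main obstacle to be the continuity of the continuation term at zero-probability observations, i.e., controlling the discontinuity of the belief update through the boundedness of $\tilde{v}_{\ast}$; a close second is making the uniqueness argument airtight by correctly reconciling the uniqueness granted by the UG property (which constrains only on-support behavior) with the claim of continuity into the full space of public decision rules, which is why I would be explicit about restricting to full-support PBSs or to the on-support component of the codomain.
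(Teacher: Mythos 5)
Your proposal is correct and takes essentially the same route as the paper: establish joint continuity of the one-step objective $q_{\ast}(\tilde{s},\tilde{a}) = \mathfrak{\tilde{R}}(\tilde{s},\tilde{a}) + \mathbb{E}_{\tilde{S}'\sim\tilde{\mathcal{T}}(\tilde{s},\tilde{a})}\tilde{v}_{\ast}(\tilde{S}')$ using Theorem \ref{thm:val_cont}, apply Berge's maximum theorem to get upper hemicontinuity of the argmax, and use the UG property to conclude single-valuedness and hence continuity. You additionally work out two points the paper's proof asserts without detail---continuity of the continuation term at zero-probability public observations (via the dominated unnormalized posterior) and the on-support/off-support caveat for single-valuedness---both of which are genuine and correctly handled.
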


The proofs for Theorem \ref{thm:val_cont} and Theorem \ref{thm:pol_cont} are detailed in Section \ref{sec:cont}.

\subsection{Sufficient Conditions for Uniqueness Guaranteeing}

In aggregate, the previous two sections show that UG PuB-AMG equilibria are continuous under mild assumptions and correspond with UG equilibria in the original game.
While these results are positive, it is important to realize that the relevancy of these results hinges on the existence of UG objectives with desirable solutions.
Fortunately, as we discuss below, such objectives exist.

\begin{definition}
We call the objective $\mathfrak{J}$ induced by 
\[
\mathfrak{R} \colon (h, a, \delta) \mapsto 
\begin{cases}
\mathcal{R}(h, a) - \alpha \text{KL}(\delta, \rho(h_{\iota})) & \iota = 0\\
\mathcal{R}(h, a) + \alpha \text{KL}(\delta, \rho(h_{\iota})) & \iota = 1,
\end{cases}
\]
for some reference policy $\rho$, a \textit{MiniMaxKL} objective.
\end{definition}

\begin{definition}
We call the objective $\mathfrak{J}$ induced by
\[
\mathfrak{R} \colon (h, a, \delta) \mapsto 
\begin{cases}
\mathcal{R}(h, a) + \alpha \mathcal{H}(\delta) & \iota = 0\\
\mathcal{R}(h, a) - \alpha \mathcal{H}(\delta) & \iota = 1,\\
\end{cases}
\]
a \textit{MiniMaxEnt} objective.
\end{definition}

\begin{remark}
MiniMaxEnt is the special case of MiniMaxKL in which $\rho$ is uniform.
\end{remark}

To our knowledge, MiniMaxKL objectives were introduced by \citet{fforel}, who showed the following result.

\begin{theorem}[\citet{fforel}] \label{thm:perolat}
 MiniMaxKL objectives are UG for interior~$\rho$ for any $\alpha > 0$.
\end{theorem}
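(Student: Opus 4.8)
The plan is to reformulate the MiniMaxKL objective in sequence form and exploit the strict convex--concavity that the KL terms induce. First I would write $\mathfrak{J}(\pi_0,\pi_1) = G(\pi_0,\pi_1) - \alpha \Phi_0(\pi_0,\pi_1) + \alpha \Phi_1(\pi_0,\pi_1)$, where $G$ is the unregularized expected return and $\Phi_\iota(\pi_0,\pi_1) = \sum_{h_\iota} P(\text{reach } h_\iota \mid \pi)\,\text{KL}(\pi_\iota(h_\iota),\rho(h_\iota))$ is the reach-weighted KL penalty for player $\iota$. Passing to sequence form, $G$ is bilinear, and each reach weight factors as $P(\text{reach } h_\iota \mid \pi) = q_\iota(h_\iota)\, w_{h_\iota}$, where $q_\iota$ is the acting player's own contribution and $w_{h_\iota}$ collects the chance and opponent contributions. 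The key algebraic observation is that $q_\iota(h_\iota)\,\text{KL}(\pi_\iota(h_\iota),\rho(h_\iota))$ equals a perspective-of-KL term $\sum_a x_{h_\iota,a}\log\bigl(x_{h_\iota,a}/(x_{h_\iota}\rho(a\mid h_\iota))\bigr)$ in the sequence-form variables $x$, which is jointly convex (a dilated KL regularizer).

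From this decomposition I would read off convex--concavity. For $\pi_1$ fixed, $G$ is linear in player $0$'s sequence form, $\Phi_0(\cdot,\pi_1)$ is a nonnegatively weighted sum of dilated-KL terms (the opponent weights $w_{h_0}$ are now fixed constants), hence convex, and $\Phi_1(\cdot,\pi_1)$ is \emph{linear} because there player $0$ enters only through the reach weight $w_{h_1}$, which is affine in $x$. Therefore $\mathfrak{J}(\cdot,\pi_1)$ is concave, and symmetrically $\mathfrak{J}(\pi_0,\cdot)$ is convex. Because $\rho$ is interior, each $\text{KL}(\cdot,\rho(h_\iota))$ is finite and continuous on the whole simplex, so $\mathfrak{J}$ is continuous on the compact product of strategy spaces and Sion's minimax theorem yields a saddle point; this already gives existence of an equilibrium in every subgame.

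The heart of the argument is uniqueness, and the main obstacle is that the strictness of the above convexity is visible only on the \emph{faces that are actually reached}: at an information state with reach weight zero the player's behavior there is unconstrained. I would resolve this using the barrier property of KL. For any fixed finite opponent strategy, the regularized best response optimizes, at each reached information state, a linear term regularized by $\alpha\,\text{KL}(\cdot,\rho)$, whose unique optimizer is a softmax distribution proportional to $\rho(\cdot\mid h_\iota)$ times an exponentiated local payoff; since $\rho$ is interior, this best response is \emph{fully mixed}. Consequently, at any equilibrium both players play with full support, so every information state reachable within the subgame's prior support is reached with positive probability, i.e.\ all relevant weights $w_{h_\iota}$ are strictly positive. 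On this full-support (interior) region the dilated-KL regularizers are \emph{strictly} convex in their own player's sequence form.

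Finally I would combine strictness with the interchangeability of saddle points. Given two equilibria, convex--concavity makes every cross pair a saddle point attaining the common value; fixing a full-support equilibrium strategy $\pi_1^\ast$ of player $1$, both candidate player-$0$ strategies maximize the strictly concave map $\mathfrak{J}(\cdot,\pi_1^\ast)$ over the interior region, forcing them to coincide on all reachable information states, and symmetrically for player $1$. Hence the equilibrium is unique up to behavior at information states of zero prior reach, which is irrelevant to $\Pi^{\downarrow}$. Since nothing used the particular initial distribution, the argument applies verbatim to every subgame, establishing that MiniMaxKL is UG. The delicate points to get right are the perspective-function convexity of the reach-weighted KL, the verification that the opponent-coupling terms are affine (so they do not destroy concavity or convexity), and the twofold use of interiority of $\rho$---for finiteness of the regularizer and for full support of the softmax best response.
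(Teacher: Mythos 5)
Your argument is essentially sound, but note that the paper does not actually prove this statement: Theorem \ref{thm:perolat} is imported from \citet{fforel}, and the closest in-house analogue is the generalization in Theorem \ref{thm:suff}, which is established by a genuinely different method. There, existence follows from Brouwer's fixed-point theorem applied to the map sending regularized action-value vectors to the action values of the greedified policy (well defined because the greedification map $g$ is interior-valued), and uniqueness follows from forward induction (interiority forces every equilibrium to be fully mixed) combined with backward induction (best responses to fully mixed policies are unique), so that every equilibrium is strict and two distinct equilibria contradict each other. Your route instead passes to sequence form, identifies the reach-weighted KL penalties as dilated (perspective-of-KL) terms that are convex in the acting player's realization plan and affine in the opponent's, deduces that $\mathfrak{J}$ is concave in player $0$ and convex in player $1$, obtains existence from Sion's theorem, and obtains uniqueness from interchangeability of saddle points plus strict convexity of the dilated KL on the interior of the sequence-form polytope---available precisely because the softmax form of the regularized best response makes every equilibrium fully mixed. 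This is much closer in spirit to the monotone-operator argument of \citet{fforel} than to the paper's proof of Theorem \ref{thm:suff}. What your approach buys is a structural explanation (strict convex--concavity of the regularized game) and no reliance on properties of $g$; what the paper's approach buys is coverage of regularizers $\psi$ for which sequence-form convex--concavity is unavailable but interior, continuous, single-valued greedification still holds. Two points to keep honest: uniqueness holds only up to behavior at AOHs unreachable under chance, which is consistent with the paper's convention that $\Pi^{\downarrow}$ sets such AOHs arbitrarily; and the strictness of the dilated-KL convexity on the interior deserves a line of justification (induct on the first information state at which two realization plans differ, where the shared parent reach is positive), since perspective functions are never strictly convex jointly.
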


Importantly, beyond being UG, MiniMaxKL objectives can achieve arbitrarily small exploitabilities, as is formalized by the proposition below. 
In aggregate, these results mean that it is possible to compute policies with arbitrarily small exploitabilities via computing the equilibria of MiniMaxKL objectives in the PuB-AMG.

\begin{proposition} \label{prop:close}
Let $\mathfrak{J}$ be a MiniMaxKL objective parameterized by a reference policy $\rho$, placing at least $\epsilon$ probability on every action, and regularization parameter $\alpha$.
Then the exploitability of the MiniMaxKL equilibrium is bounded by $\alpha T |\log \epsilon|$, where $T$ is the horizon of the game.
\end{proposition}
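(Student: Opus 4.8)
The plan is to bound the two-sided exploitability of the MiniMaxKL equilibrium $\pi_\ast$ by controlling how much the KL regularizer can distort best-response values in the unregularized game, and then to argue that the maximizing player's and minimizing player's distortions share a single regularization budget of size $T|\log\epsilon|$. Throughout, write $\mathcal{J}(\pi_0,\pi_1) = \mathbb{E}[\sum_{t=0}^{T-1}\mathcal{R}(H^t,A^t)\mid\pi_0,\pi_1]$ for the unregularized return and $\mathrm{expl}(\pi) = \max_{\pi_0}\mathcal{J}(\pi_0,\pi_1) - \min_{\pi_1}\mathcal{J}(\pi_0,\pi_1)$ for exploitability.

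First I would establish a \emph{per-step KL bound}. For any decision rule $\delta\in\Delta(\mathbb{A})$ and reference $\rho(h_\iota)$ placing at least $\epsilon$ on every action, decompose $\text{KL}(\delta,\rho(h_\iota)) = -\mathcal{H}(\delta) - \sum_a \delta(a)\log\rho(a\mid h_\iota)$. Nonnegativity of KL gives the lower bound $0$, while $-\mathcal{H}(\delta)\le 0$ and $-\log\rho(a\mid h_\iota)\le|\log\epsilon|$ give $\text{KL}(\delta,\rho(h_\iota))\le|\log\epsilon|$. Since every trajectory contains exactly $T$ decision points, the accumulated KL along any trajectory lies in $[0,T|\log\epsilon|]$. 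Letting $K_i(\pi)$ denote the expected KL accumulated at player $i$'s decision points under $\pi$, this yields $K_0,K_1\ge 0$ with $K_0(\pi)+K_1(\pi)\le T|\log\epsilon|$, and the clean decomposition
\[
\mathfrak{J}(\pi) = \mathcal{J}(\pi) - \alpha K_0(\pi) + \alpha K_1(\pi).
\]

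Next I would exploit the saddle-point structure. Since MiniMaxKL is UG for interior $\rho$ by \Cref{thm:perolat} and $\mathfrak{J}$ is concave in $\pi_0$ and convex in $\pi_1$, the unique solution $\pi_\ast$ is a saddle point, so $\mathfrak{J}(\pi_0,\pi_{\ast,1})\le\mathfrak{J}(\pi_\ast)\le\mathfrak{J}(\pi_{\ast,0},\pi_1)$ for all $\pi_0,\pi_1$. Substituting the decomposition into the left inequality and discarding the nonpositive term $-\alpha K_1$ gives $\max_{\pi_0}\mathcal{J}(\pi_0,\pi_{\ast,1}) \le \mathfrak{J}(\pi_\ast) + \alpha\max_{\pi_0}\big[K_0(\pi_0,\pi_{\ast,1}) - K_1(\pi_0,\pi_{\ast,1})\big]$, and symmetrically $\min_{\pi_1}\mathcal{J}(\pi_{\ast,0},\pi_1) \ge \mathfrak{J}(\pi_\ast) - \alpha\max_{\pi_1}\big[K_1(\pi_{\ast,0},\pi_1) - K_0(\pi_{\ast,0},\pi_1)\big]$. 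Subtracting cancels $\mathfrak{J}(\pi_\ast)$ and bounds $\mathrm{expl}(\pi_\ast)$ by $\alpha$ times the sum of the two maxima; in each maximum the opponent's KL enters with a subtracted, nonnegative sign, so each term is charged only against the respective best responder's own move budget. Because along every trajectory the two players' decision points partition the $T$ steps, these budgets together do not exceed $T|\log\epsilon|$, giving $\mathrm{expl}(\pi_\ast)\le\alpha T|\log\epsilon|$.

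The main obstacle is precisely this last accounting step. Applying the crude bound $K_i\le T|\log\epsilon|$ to each maximum separately loses a factor of two, so the argument must instead charge each maximum only to the move-count of its own best responder and then invoke the per-trajectory partition $c_0(\tau)+c_1(\tau)=T$. This is immediate when the move counts are trajectory-independent (as in a strictly alternating turn order), but in the general case the two maxima are evaluated under different trajectory distributions (player~$0$'s deviation versus player~$1$'s), so care is needed to show the combined charge still respects the single horizon budget; I expect this reconciliation of the two best-response measures against the shared partition of the $T$ steps to be the crux of the formal proof.
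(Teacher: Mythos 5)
Your core argument is the same as the paper's: bound each per-step KL term by $|\log\epsilon|$ via the identical cross-entropy-minus-entropy decomposition, conclude that $\mathfrak{J}$ and $\mathcal{J}$ differ by at most $\alpha T|\log\epsilon|$ on any policy pair, and then use the saddle-point property of the regularized equilibrium to eliminate the regularized exploitability term. The difficulty you flag in your last paragraph, however, is an artifact of your definition of exploitability. The paper defines
\[
\text{expl}(\pi) = \tfrac{1}{2}\left(-\min_{\pi_1'}\mathcal{J}(\pi_0,\pi_1') + \max_{\pi_0'}\mathcal{J}(\pi_0',\pi_1)\right),
\]
i.e.\ the \emph{average} of the two one-sided best-response gains, whereas you use their sum. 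Under the paper's definition the crude bound is already enough: each one-sided term moves by at most $\alpha T|\log\epsilon|$ when you pass from $\mathcal{J}$ to $\mathfrak{J}$, the regularized average is nonpositive by the saddle-point property, and the factor $\tfrac{1}{2}$ absorbs the two corrections. No sharing of the $T|\log\epsilon|$ budget between the two best responders is needed, and the paper does not attempt it.

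If you insist on the unhalved (NashConv-style) definition, your proposed repair does not go through, and you are right to be suspicious of it. The two maxima $\max_{\pi_0}[K_0-K_1]$ and $\max_{\pi_1}[K_1-K_0]$ are taken under trajectory distributions induced by two \emph{different} deviations; in a game where the identity of the acting player depends on the history, player $0$'s best response can steer play toward trajectories on which it acts at nearly all $T$ steps while player $1$'s best response does the same for itself, so the two expected move counts can each approach $T$ and the combined charge is $2\alpha T|\log\epsilon|$, not $\alpha T|\log\epsilon|$. As written, your argument therefore proves the stated constant only for the halved quantity that the paper calls exploitability (which is all the proposition claims); for the summed quantity it yields only a factor-of-two-worse bound. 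Everything else --- the per-step KL bound, the decomposition $\mathfrak{J}=\mathcal{J}-\alpha K_0+\alpha K_1$, and the use of the saddle point --- matches the paper's proof.
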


The proof of Proposition \ref{prop:close} is detailed in Section \ref{sec:suff}.

While the MiniMaxKL equilibrium is likely the most useful UG equilibrium concept, it is conceivable that other UG concepts may be useful.
Thus, we provide a generalization to a larger class of regularized objectives in Theorem \ref{thm:suff} in Section \ref{sec:suff}.
\section{Discussion}

\paragraph{Use Cases} There are at least three main ways to approach solving regularized PuB-AMGs. The first is to adapt heuristic search value iteration \citep{hsvi} into a tabular regularized PuB-AMG solver.
Encouragingly, this has already been done for PuB-MDPs \citep{Dibangoye} and for unregularized Pub-AMGs \citep{valuedp,buffet:hal-03080287,delage:hal-03523399}.

The second is to use the regularized PuB-AMG as a building block for model-free deep reinforcement learning agents.
This approach would look similar to BAD \cite{bad}, which is a policy gradient method that was applied to an approximate PuB-MDP in Hanabi \cite{BARD2020103216}.
We believe that it is possible that a BAD-like approach in regularized PuB-AMG would be better suited to a game like poker, where it is convenient to tabularly track the PBS, than it was to Hanabi, where \citet{bad} required complicated posterior approximation techniques.

The third is to use the regularized PuB-AMG as a building block for expert iteration \citep{exit,exit-thesis} with function approximation.
This approach would look almost identical to ReBeL \citep{brown2020rebel} but have a few key differences:
i) It would use a regularized objective, rather than an unregularized one as ReBeL does; ii) It would use the beliefs induced by its own policy at test time, rather than the fictitious beliefs that ReBeL uses;
iii) It would (optionally) be able to perform re-planning (e.g., wherein a multi-ply search is only used to make the immediate decision), whereas ReBeL must play its search policy until the end of the subgame that was searched over, iv) It would (optionally) be able to perform additional search iterations at test-time, whereas ReBeL is required to use the same number of search iterations as it did during training.

\paragraph{On the Role of Regularization} One possible set of concerns regarding these proposed use cases is that: i) to achieve good performance in these use cases, it may be necessary to approximate equilibria of objectives having small amounts of regularization; ii) approximating equilibria with small amounts of regularization may be too difficult.
In tabular settings, i) may be true if the goal is to achieve competitive performance with methods not based on regularization, such as counterfactual regret minimization (CFR) \citep{cfr}; however, \citet{mmd} recently showed that regularization-based methods can be made competitive with CFR in tabular settings by slowly annealing the amount of regularization, suggesting that ii) may be false.
On the other hand, in larger settings in which function approximation is necessary, ii) may be true; however, \citet{mmd} also showed that deep reinforcement learning approaches with substantial amounts of regularization can achieve good performance in terms of approximate exploitability, suggesting that i) may be false.

\section{Experiments}

We perform two experiments in which we naively tabularly solve small PuB-AMGs under regularized objectives using magnetic mirror descent \citep{mmd} to offer further evidence for our results.
We show the results for perturbed rock-paper-scissors Figure \ref{fig:rps} and include results for Kuhn poker, as well as the details of our solving procedures, in Section \ref{sec:exp}.

\begin{figure*}[h]
    \centering
    \includegraphics[width=\textwidth]{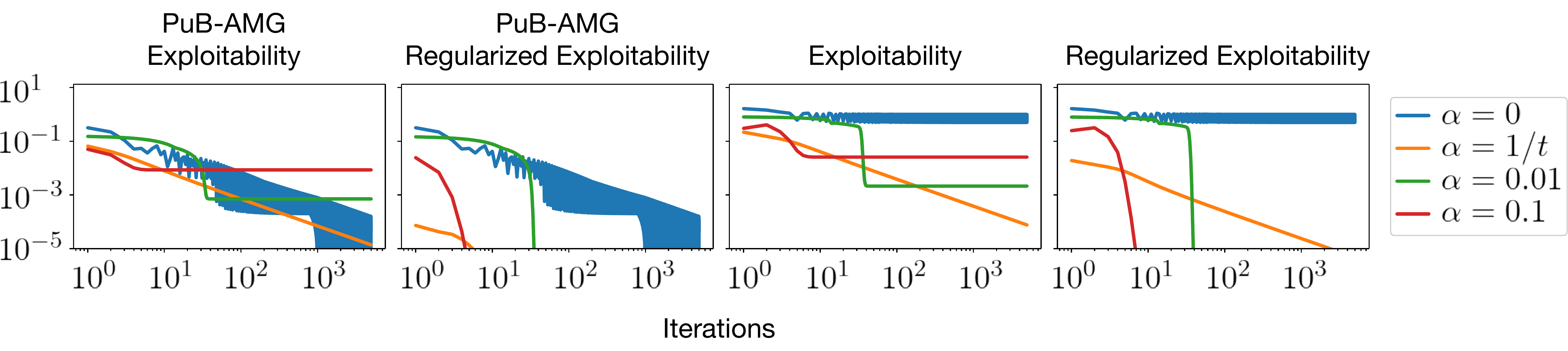}
    \caption{Results for perturbed rock-paper-scissors.}
    \label{fig:rps}
\end{figure*}

On the far left, we show exploitability in the PuB-AMG.
The iterates of the unregularized objective (blue) trend and the iterates of the objective with annealed regularization (orange) both trend toward zero.
The iterates of the objective with constant regularization converge to a constant positive exploitability.

On the middle left, we show the regularized exploitability (i.e., exploitability under the regularized objective) in the PuB-AMG of the objective associated for the iterate.
We observe that all objectives induce iterates that converge to zero, as intended.

On the middle right, we show the exploitability in the original game.
Because the non-correspondence problem exists for the second-moving player, the exploitabilities of the iterates from the unregularized objective (blue) remain high, despite that exploitability is going to zero in the PuB-AMG.
The objectives with fixed regularization (green, red) induce iterates that converge to lower, but non-zero, exploitability values.
The objective with annealed regularization (orange) induces iterates that converge to zero in exploitability.

\looseness=-1
On the far right, we show the regularized exploitability in the original game.
We observe that, as expected, the approaches with non-zero regularization that converge to zero regularized exploitability in the PuB-AMG also converge to zero regularized exploitability in the original game.
In contrast, the unregularized approach does not converge, despite converging in the PuB-AMG.

\section{Related Work}

\paragraph{Public Belief States in Common-Payoff Games} In the sense of providing reductions for multi-agent problems using PBSs, our work is similar to those of \citet{nayyar}, \citet{Dibangoye}, and \citet{Oliehoek2013SufficientPS}.
As discussed in the background, \citet{nayyar} provided a reduction from solving common-payoff games to solving belief MDPs; independently, \citet{Dibangoye} and \citet{Oliehoek2013SufficientPS} discovered similar reductions.
These ideas have been leveraged in a large body of work in decentralized control literature \citep{control2013,control2014a,control2014b,control2015,control2016a,control2016b,control2018a,control2018b,control2018c,control2019,control2021} and machine learning literature \citep{ml2013a,ml2013b,ml2014b,ml2014a,ml2018,bad,sparta,capi_paper,fickinger2021scalable,bft,ml2022}.
Use cases include game solving \citep{Dibangoye} and decision-time planning \citep{sparta,fickinger2021scalable,bft}.

\paragraph{Public Belief States in Two-Player Zero-Sum Games} PBSs have also been used in many works in the context of 2p0s games.
For our purposes, we taxonomize these into those concerned with studying the PuB-AMG \citep{Wiggers2016StructureIT,nayyar-zs,valuedp,buffet:hal-03080287,delage:hal-03523399,kartik2021upper} and those concerned with sound decision-time planning and expert iteration \citep{cfrd,margin,libratus,safenested,deepstack,multivalued,Zarick2020UnlockingTP,brown2020rebel,pog}.

Most of the former group is concerned with analyzing the structure of the PuB-AMG and using HSVI \citep{hsvi} to compute the equilibrium value of the game.\footnote{In concurrent work, \citet{delage_2022} show how an $\epsilon$-Nash equilibrium of the original game can be extracted from a variant of this approach without requiring UG objectives.}
Our work is complementary in the sense that it shows that solving a regularized PuB-AMG would yield a regularized equilibrium in the original game.

The latter group can be broken down into two subgroups, those that use opt-out values to circumvent the non-correspondence problem \citep{libratus,safenested,deepstack,pog} and that which uses no-regret learning to circumvent the non-correspondence problem \citep{brown2020rebel}.
Both possess substantial downsides.
For the opt-out value approach: i) the policy and value are discontinuous functions of the opt-out values\footnote{Though \citet{pog} show that certain approximate value functions can be made continuous.}, and ii) the opt-values must be approximated separately from self play.
For the no-regret learning approach: i) the search policy must be played for the entire subgame that was searched over (i.e., re-planning is not allowed), ii) the search algorithm must be no regret, iii) the policy is a discontinuous function of the PBS, and iv) the same number of search iterations must be used at test time as were used during training.
In contrast, decision-time planning using a regularized objective in the PuB-AMG involves no opt-out values, involves no discontinuities, allows for re-planning, is search-algorithm agnostic, and can use an arbitrary number of search iterations at test time.

\paragraph{MiniMaxKL Objectives in Two-Player Zero-Sum Games} A number of recent prior works have made use of MiniMaxEnt and MiniMaxKL objectives 
for the purpose of inducing last iterate convergence \citep{fforel,Cen2021FastPE,zeng2021,mmd,stratego}.
While we also make use of these objectives, our use case (eliminating the non-correspondence problem) differs substantially.

\paragraph{Public Belief States in Two-Team Zero-Sum Games} As articulated in the introduction and Corollary \ref{cor:2t02}, our work is related to a recent body of literature \citep{carminati1,Zhang2022TeamBD,carminati2} showing that solving 2t0s games can be reduced to solving a 2p0s game by using intra-team policy announcements.
There has also been recent work leveraging this reduction to perform decision-time planning \citep{zhang2022subgame}.

\paragraph{Stackelberg Games} Public belief games with two time steps are closely related to Stackelberg games \citep{stackelberg,Schelling1960,Gibbons1992}.
A Stackelberg game is one in which a distinguished leader publicly commits to a strategy and a follower best responds to it, resulting in a bilevel optimization problem.
As with a public belief game, when a Stackelberg game is two player zero sum, Stackelberg equilibrium coincides with Nash equilibrium \textit{for the leader}, but the follower’s best response is generally highly exploitable in the game without public commitments.
While there exist tie-breaking procedures in Stackelberg literature (e.g., strong or weak Stackelberg equilibrium), they \textit{do not} resolve the non-correspondence issue.

\section{Conclusion and Future Work}

In this work, we provided a reduction from computing regularized equilibria of 2p0s games to computing regularized equilibria of PuB-AMGs.
We see this contribution as resolving an important gap in literature between common-payoff games and 2p0s games.

We see numerous impactful directions for future work.
The first involves comparing a high performance implementation of a regularized-objective-in-the-PuB-AMG approach to expert iteration \citep{exit,exit-thesis} to those of existing approaches \citep{brown2020rebel,pog};
while we have shown here that a regularized-objective-in-the-PuB-AMG approach possesses favorable properties in comparison to ReBeL \citep{brown2020rebel} and Player of Games \citep{pog}, verifying that these advantages manifest in practice would be a valuable contribution.
The second involves benchmarking a high performance implementation of a BAD-like \citep{bad} approach to learning in the regularized PuB-AMG; because our results open the door for the first time to such an approach, it is unknown how the performance of such an approach would compare against that of non-PBS-based model-free algorithms.
Third, by providing a simpler approach to working with PBSs in 2p0s games, our work provides further motivation for developing new approaches for approximating PBSs at scale; while \citet{bft} recently made progress in this direction by showing that fine-tuning can effectively approximate PBSs, amortized approximation of PBSs remains an open problem.
Finally, it may be possible to extend some weaker form of the results from our work to general-sum settings.

\section{Acknowledgements}

We thank Vickram Rajendran, Julien Perolat, Yiding Jiang, and Alexander Robey for helpful discussions.

\bibliography{example_paper}
\bibliographystyle{icml2023}

\newpage

\appendix

\onecolumn

\section{Definitions} \label{sec:def}

First, we formalize our definition of the correspondence mapping $\Pi^{\downarrow}$ in Algorithm \ref{alg:cor} below.

\begin{algorithm}[H]
    \caption{Correspondence Mapping $\Pi^{\downarrow}$}
    \label{alg:cor}
    \begin{algorithmic}
        \STATE {\bfseries Input:}{}{ $\tilde{\pi}$}
        \STATE $\text{queue} \gets [\tilde{s}^0]$
        \STATE $\pi \gets \{\}$            \WHILE{$\text{len}(\text{queue}) > 0$}
        \STATE $\tilde{s} \gets \text{queue.pop}()$
        \STATE $\tilde{a} \gets \tilde{\pi}(\tilde{s})$ \# Assume $\tilde{\pi}$ is deterministic.\footnotemark
        \FOR{$h \in \text{supp}(\tilde{s})$}
        \STATE $\pi(h_{\iota}) = \tilde{a}(h_{\iota})$
        \ENDFOR
        \FOR{$\tilde{s}' \in \text{supp}(\tilde{\mathcal{T}}(\tilde{s}, \tilde{a}))$} 
        \STATE $\text{queue.append}(\tilde{s'})$ 
        \ENDFOR
        \ENDWHILE
        \FOR{untouched $h_{\iota}$}
        \STATE Set $\pi(h_{\iota})$ arbitrarily. \# AOH is unreachable.
        \ENDFOR
        \STATE \textbf{return} $\pi$        
    \end{algorithmic}
\end{algorithm}
\footnotetext{This assumption is not required, but makes for cleaner presentation.
}

Next, we define a canonical choice function $\Pi^{\uparrow}$, which maps each joint policy to a corresponding PuB-AMG policy.

\begin{algorithm} [H]
    \caption{Canonical Choice Mapping $\Pi^{\uparrow}$} \label{alg:can}
    \begin{algorithmic}[1] 
        \STATE {\bfseries Input:}{}{$\pi$}
        \STATE $\tilde{\pi} \gets \{\}$
        \FOR{all $\tilde{s}$}
            \STATE $\tilde{a} \gets \{\}$
            \FOR{$h \in \text{supp}(\tilde{s})$}
                \STATE $\tilde{a}( h_{\iota}) = \pi(h_{\iota})$ \# Ignore the belief and do what $\pi$ does at $h_{\iota}$.
            \ENDFOR
            \STATE $\tilde{\pi}(\tilde{s}) = \tilde{a}$
        \ENDFOR
        \STATE \textbf{return} $\tilde{\pi}$
    \end{algorithmic}
\end{algorithm}

In short, Algorithm \ref{alg:can} yields a PuB-AMG policy in which the agents play according to $\pi$ irrespective of the public belief.
Therefore, we have that $\mathfrak{R}(\pi)= \mathfrak{\tilde{R}}(\Pi^{\uparrow}(\pi))$.
Note that, in contrast to the correspondence mapping, the canonical choice mapping is invariant to opponent policy.
Thus, we also allow $\Pi^{\uparrow}$ to be applied directly to individual player policies.

We also introduce some additional definitions.

\begin{definition}
For a minimax objective $\mathfrak{J}$, the value of the game under $\mathfrak{J}$ is $\max_{\pi_0'} \min_{\pi_1'} \mathfrak{J}(\pi_0', \pi_1') =  \min_{\pi_1'}\max_{\pi_0'} \mathfrak{J}(\pi_0', \pi_1')$.
\end{definition}

\begin{remark} \label{lem:game_val}
UG objectives guarantee a well-defined value.
This follows immediately from the fact that both players can guarantee the unique equilibrium value.
\end{remark}

\begin{definition}
For a minimax objective $\mathfrak{J}$, the best response value to $\pi_0$ under $\mathfrak{J}$ is $\min_{\pi_1'} \mathfrak{J}(\pi_0, \pi_1')$; analogously, the best response value to $\pi_1$ under $\mathfrak{J}$ is $\max_{\pi_0'} \mathfrak{J}(\pi_0', \pi_1)$.
We denote the best response to $\pi_i$ as $\text{BR}(\pi_i)$.
A policy is part of a Nash equilibrium if it achieves the value of the game against a best response.
\end{definition}

\begin{definition}
For a minimax objective $\mathfrak{J}$, the exploitability $\pi$ under $\mathfrak{J}$ is defined as: \[\text{expl}(\pi) = \frac{-\min_{\pi_1'} \mathfrak{J}(\pi_0, \pi_1') + \max_{\pi_0'} \mathfrak{J}(\pi_0', \pi_1)}{2}.\]
A joint policy is a Nash equilibrium if it has exploitability zero. 
\end{definition}

\begin{definition}
For a minimax objective $\mathfrak{J}$ induced by 
\[\mathfrak{R} \colon (h, a, \delta) \mapsto 
\begin{cases}
\mathcal{R}(h, a) - \psi(\delta, h_{\iota}) & \iota = 0\\
\mathcal{R}(h, a) + \psi(\delta, h_{\iota}) & \iota = 1,
\end{cases}\] the action value for action $a$ at AOH $h_{\iota}^{t}$ under joint policy $\pi$ is
\[Q(h_{\iota}, a) = (-1)^{\mathbb{I}[\iota=1]}\mathbb{E}_{\pi} \left[\mathfrak{R}(H^t, A^t, a \mapsto \mathbb{I}[a^t=a]) + \sum_{t' > t}^T \mathfrak{R}(H^{t'}, A^{t'}, \pi(A^{t'})) \mid h_{\iota}^{t}, a^t \right].\]
In words, it is the expected future value to the acting player for taking $a$ at $h_{\iota}^{t'}$ assuming that both players have played according to $\pi$ up until now and will continue to play according to $\pi$ hereinafter.
 \end{definition}
\section{Theory}

We now detail the proofs of our theoretical results.

\subsection{Non-Correspondence of Nash Equilibria} \label{sec:worst_case}

\begin{figure}[H]
    \centering
    \includegraphics[width=\textwidth]{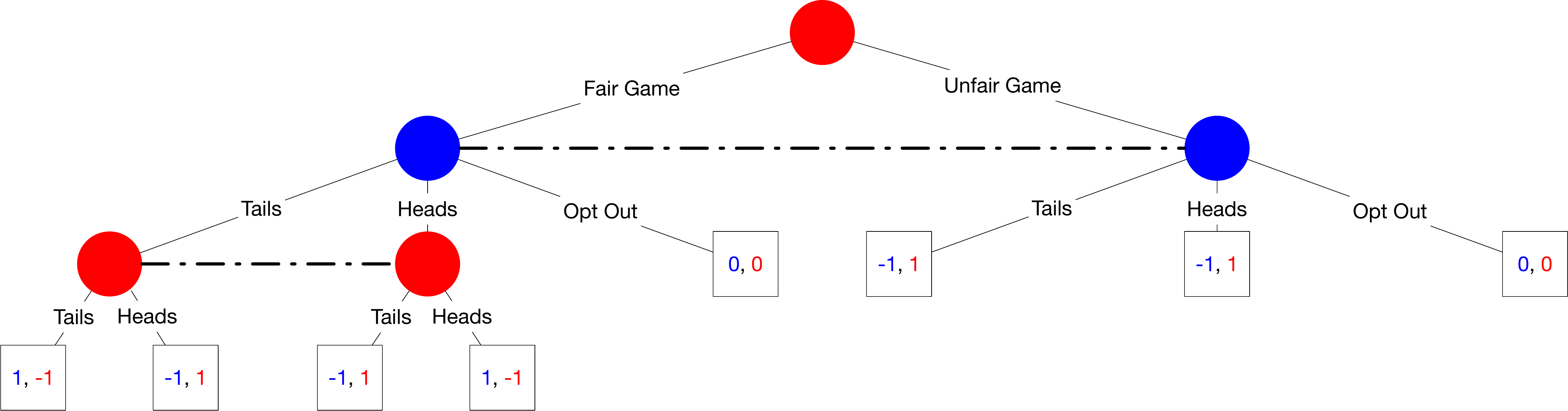}
    \caption{A rigged and adversarial variant of matching pennies.}
    \label{fig:worst_case}
\end{figure}

\textbf{Proposition \ref{prop:worst_case}}
A PuB-AMG Nash equilibrium $\tilde{\pi}$ may correspond with a joint policy $\Pi^{\downarrow}(\tilde{\pi})$ that is maximally exploitable.

\begin{proof}
We show that this worst case can be realized in the 2p0s rigged adversarial matching pennies game depicted in Figure \ref{fig:worst_case}.
The game starts with two options for the red player: it can either decide to make the game fair or to rig the game.
Then, without having observed the red player's decision, the blue player decides whether to opt out of the game altogether, in which case both players receive a payout of $0$, or to play adversarial matching pennies.
If the blue player opts in and the game is rigged, the red player receives a payout of $1$ independent of the blue player's selection.
If the blue player opts in and the game is not rigged, the blue player receives a payout of $1$ if the players select the same side of the coin; otherwise, if the players selected opposite sides of the coin, the red player receives a payout of $1$.

In the game, the blue player's only Nash equilibrium strategy is to opt out with probability one.
The red player's Nash equilibria strategies require at least one of i) rigging the game with probability one and ii) mixing 50-50 between tails and heads.

Now, consider the following PuB-AMG policy, where superscript denotes time step within the game:
\begin{itemize}[leftmargin=*]
    \item $\tilde{\pi}^0(\emptyset) = (\text{Fair}, \text{Unfair}) \mapsto (1, 0)$.
    \item $\tilde{\pi}^1(\tilde{\pi}^0) = 
    \begin{cases}
    (\text{Tails}, \text{Heads}, \text{Out}) \mapsto (1/2, 1/2, 0) & \tilde{\pi}^0(\text{Fair}) = 1\\
    (\text{Tails}, \text{Heads}, \text{Out}) \mapsto (0, 0, 1) & \text{otherwise.}
    \end{cases}$
    \item 
    $\tilde{\pi}^2(\tilde{\pi}^0, \tilde{\pi}^1) =
    \begin{cases}
    (\text{Tails}, \text{Heads}) \mapsto (0, 1) & \tilde{\pi}^1(\text{Tails}) \geq  1/2\\
    (\text{Tails}, \text{Heads}) \mapsto (1, 0) & \text{otherwise.}
    \end{cases}$
\end{itemize}
We claim that $\tilde{\pi}$ is a PuB-AMG Nash equilibrium.
To see this, first consider that the expected return is $0$: the red player always opts in, the blue player mixes evenly between heads and tails, and the red player always selects tails.
Next consider that the red player has no incentive to deviate to an unfair game, because the blue player will opt out, yielding an expected return of zero.
Also consider the blue player has no incentive to place additional mass on opting out, as it yields an expected return of zero.
Furthermore, the blue player has no incentive to select a different mixture of heads and tails, as doing so will decrease its expected return since the red player best responds at the final time step.
Lastly, consider that the red player is best responding at the final time step and, therefore, has no incentive to deviate.

Then, consider that the corresponding policy $\pi=\Pi^{\downarrow}(\tilde{\pi})$ is as follows:
\begin{itemize}
    \item $\pi^0 \colon (\text{Fair}, \text{Unfair}) \mapsto (1, 0)$.
    \item $\pi^1 \colon (\text{Tails}, \text{Heads}, \text{Out}) \mapsto (1/2, 1/2, 0)$.
    \item $\pi^2 \colon (\text{Tails}, \text{Heads}) \mapsto (0, 1)$.
\end{itemize}
We claim that this policy is maximally exploitable.
To see this, consider that a red player that always rigs the game achieves an expected return of one against the blue player's policy, and consider that a blue player that always selects heads achieves an expected return of one against the red player's policy. 
\end{proof}

\subsection{ Correspondence of Uniqueness-Guaranteeing Equilibria} \label{sec:cor}

To prove Theorem \ref{thm:correspondence}, we first require some lemmas.

\begin{lemma} \label{lem:value_guar}
The best response value to $\pi_i$ in the original game is equal to the best response value of $\Pi^{\uparrow}(\pi_i)$ in PuB-AMG.
\end{lemma}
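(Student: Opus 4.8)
The plan is to prove the claimed equality by establishing the two matching inequalities between $\min_{\pi_1'}\mathfrak{J}(\pi_0,\pi_1')$, the best-response value to $\pi_0$ in the original game, and $\min_{\tilde{\pi}_1'}\mathfrak{\tilde{J}}(\Pi^{\uparrow}(\pi_0),\tilde{\pi}_1')$, the best-response value to $\Pi^{\uparrow}(\pi_0)$ in the PuB-AMG; the statement for $i=1$ then follows by symmetry after swapping $\min$ and $\max$. I would emphasize at the outset that no appeal to uniqueness is needed here: this is a general value-preservation lemma that relies only on the fact that $\Pi^{\uparrow}$ and $\Pi^{\downarrow}$ preserve objective values.

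First I would dispatch the inequality $\min_{\tilde{\pi}_1'}\mathfrak{\tilde{J}}(\Pi^{\uparrow}(\pi_0),\tilde{\pi}_1') \le \min_{\pi_1'}\mathfrak{J}(\pi_0,\pi_1')$. Fix any original-game response $\pi_1'$ and lift it with the canonical choice mapping to $\Pi^{\uparrow}(\pi_1')$. Since $\Pi^{\uparrow}$ is invariant to the opponent's policy, the pair $(\Pi^{\uparrow}(\pi_0),\Pi^{\uparrow}(\pi_1'))$ is exactly $\Pi^{\uparrow}$ applied to the joint policy $(\pi_0,\pi_1')$, so the value-preservation identity $\mathfrak{J}(\pi)=\mathfrak{\tilde{J}}(\Pi^{\uparrow}(\pi))$ gives $\mathfrak{\tilde{J}}(\Pi^{\uparrow}(\pi_0),\Pi^{\uparrow}(\pi_1'))=\mathfrak{J}(\pi_0,\pi_1')$. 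Hence the PuB-AMG minimum is bounded above by $\mathfrak{J}(\pi_0,\pi_1')$ for every $\pi_1'$, and minimizing over $\pi_1'$ yields this first inequality.

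The reverse inequality is the substantive direction, and I would obtain it through the correspondence mapping. Take any PuB-AMG response $\tilde{\pi}_1'$ and form the joint policy $\tilde{\pi}=(\Pi^{\uparrow}(\pi_0),\tilde{\pi}_1')$. By the defining property in Definition \ref{def:objective}, $\mathfrak{\tilde{J}}(\tilde{\pi})=\mathfrak{J}(\Pi^{\downarrow}(\tilde{\pi}))$, so it suffices to show $\mathfrak{J}(\Pi^{\downarrow}(\tilde{\pi}))=\mathfrak{J}(\pi_0,\Pi^{\downarrow}(\tilde{\pi})_1)$, i.e. that $\Pi^{\downarrow}$ recovers $\pi_0$ for the maximizing player. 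Here I would work through Algorithm \ref{alg:cor}: it visits only PBSs reachable under $\tilde{\pi}$, and at every such PBS with $\iota=0$ the played decision rule is $\Pi^{\uparrow}(\pi_0)(\tilde{s})$, which by construction of Algorithm \ref{alg:can} prescribes exactly $\pi_0(h_0)$ at each consistent AOH $h_0$. Thus $\Pi^{\downarrow}(\tilde{\pi})_0$ agrees with $\pi_0$ on every player-$0$ AOH reachable under $\Pi^{\downarrow}(\tilde{\pi})$, the two differing only on unreachable AOHs where behavior is irrelevant to the objective. An induction along the realized trajectory then shows that $(\pi_0,\Pi^{\downarrow}(\tilde{\pi})_1)$ and $\Pi^{\downarrow}(\tilde{\pi})$ induce the same reach distribution, hence the same objective value. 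Setting $\pi_1':=\Pi^{\downarrow}(\tilde{\pi})_1$ gives $\mathfrak{J}(\pi_0,\pi_1')=\mathfrak{\tilde{J}}(\tilde{\pi})$, so $\min_{\pi_1'}\mathfrak{J}(\pi_0,\pi_1')\le\mathfrak{\tilde{J}}(\Pi^{\uparrow}(\pi_0),\tilde{\pi}_1')$, and minimizing over $\tilde{\pi}_1'$ closes the loop.

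The main obstacle is the bookkeeping in the reverse direction: precisely, verifying that $\Pi^{\downarrow}$ reconstructs $\pi_0$ on the support of $\Pi^{\downarrow}(\tilde{\pi})$ and that off-support discrepancies never affect the value, which requires the inductive reach-distribution argument sketched above. A secondary technical point is the existence of the minimizers: since a PuB-AMG response chooses a decision rule (an element of a product of simplices) at each PBS, I would either invoke compactness to guarantee that the minima are attained or carry the argument through with infima, for which both inequalities hold verbatim.
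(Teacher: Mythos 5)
Your proposal is correct and follows essentially the same route as the paper's proof: one inequality by lifting the original-game response with $\Pi^{\uparrow}$ and invoking $\mathfrak{J}(\pi)=\mathfrak{\tilde{J}}(\Pi^{\uparrow}(\pi))$, the other by pushing the PuB-AMG joint policy down with $\Pi^{\downarrow}$ and observing that $\Pi^{\downarrow}(\Pi^{\uparrow}(\pi_0),\tilde{\pi}_1')_0$ can differ from $\pi_0$ only at unreachable AOHs. Your more careful treatment of the reachability bookkeeping and of attainment of the minima simply fills in steps the paper states in one line each.
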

 
\begin{proof}
This follows because player $-i$ has no mechanism to exploit $\Pi^{\uparrow}(\pi_i)$ beyond that of the original game, since $\Pi^{\uparrow}(\pi_i)$ ignores belief information.

More formally, consider
\begin{align*}
\min_{\tilde{\pi}_1'} \mathfrak{\tilde{J}}(\Pi^{\uparrow}(\pi_0), \tilde{\pi}_1')
&= \mathfrak{\tilde{J}}(\Pi^{\uparrow}(\pi_0), \text{BR}(\Pi^{\uparrow}(\pi_0)))\\
&= \mathfrak{J}(\Pi^{\downarrow}(\Pi^{\uparrow}(\pi_0), \text{BR}(\Pi^{\uparrow}(\pi_0))))\\
&= \mathfrak{J}(\Pi^{\downarrow}(\Pi^{\uparrow}(\pi_0), \text{BR}(\Pi^{\uparrow}(\pi_0))))_0, \Pi^{\downarrow}(\Pi^{\uparrow}(\pi_0), \text{BR}(\Pi^{\uparrow}(\pi_0))))_1)\\
&= \mathfrak{J}(\pi_0, \Pi^{\downarrow}(\Pi^{\uparrow}(\pi_0), \text{BR}(\Pi^{\uparrow}(\pi_0))))_1)\\
&\geq \mathfrak{J}(\pi_0, \text{BR}(\pi_0))\\
&= \min_{\pi_1'}\mathfrak{J}(\pi_0, \pi_1').
\end{align*}
The first equality follows by definition of the best response function $\text{BR}$.
The second equality because $\Pi^{\downarrow}$ preserves expected return.
The third equality is notational expansion.
The fourth equality follows because $\pi_0$ and $\Pi^{\downarrow}(\Pi^{\uparrow}(\pi_0), \text{BR}(\Pi^{\uparrow}(\pi_0))))_0$ can only differ at AOHs that are not reached when playing against $\Pi^{\downarrow}(\Pi^{\uparrow}(\pi_0), \text{BR}(\Pi^{\uparrow}(\pi_0))))_1$.
The inequality and final equality follow by definition of best response.

Also, consider 
\begin{align*}
\min_{\pi_1'}\mathfrak{J}(\pi_0, \pi_1')
&= \mathfrak{J}(\pi_0, \text{BR}(\pi_0))\\
&= \mathfrak{\tilde{J}}(\Pi^{\uparrow}(\pi_0), \Pi^{\uparrow}(\text{BR}(\pi_0)))\\
&\geq \mathfrak{\tilde{J}}(\Pi^{\uparrow}(\pi_0), \text{BR}(\Pi^{\uparrow}(\pi_0)))\\
&= \min_{\tilde{\pi}_1'}\mathfrak{\tilde{J}}(\Pi^{\uparrow}(\pi_0), \tilde{\pi}_1).
\end{align*}
The first equality follows by definition of the best response function $\text{BR}$.
The second equality follows because $\Pi^{\uparrow}$ preserves expected return.
The inequality and final equality follows by definition of best response.

These two inequalities can only be true if $\min_{\pi_1'} \mathfrak{\tilde{J}}(\Pi^{\uparrow}(\pi_0), \pi_1')=\min_{\pi_1'}\mathfrak{J}(\pi_0, \pi_1')$.
An analogous argument shows the same result for $\pi_1$.
\end{proof}

\begin{corollary} \label{cor:expl}
The exploitability of $\pi$ in the original game is equal to the exploitability of $\Pi^{\uparrow}(\pi)$ in PuB-AMG.
\end{corollary}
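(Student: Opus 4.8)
The plan is to reduce the corollary directly to Lemma~\ref{lem:value_guar} by unfolding the definition of exploitability. Recall that exploitability is the average of two best-response values, one per player: $\text{expl}(\pi) = \frac{1}{2}\left(-\min_{\pi_1'}\mathfrak{J}(\pi_0, \pi_1') + \max_{\pi_0'}\mathfrak{J}(\pi_0', \pi_1)\right)$, and similarly $\text{expl}(\Pi^{\uparrow}(\pi)) = \frac{1}{2}\left(-\min_{\tilde{\pi}_1'}\mathfrak{\tilde{J}}(\Pi^{\uparrow}(\pi)_0, \tilde{\pi}_1') + \max_{\tilde{\pi}_0'}\mathfrak{\tilde{J}}(\tilde{\pi}_0', \Pi^{\uparrow}(\pi)_1)\right)$. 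It therefore suffices to match the two numerator terms one at a time.

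First I would observe that $\Pi^{\uparrow}$ decomposes across players, i.e. $\Pi^{\uparrow}(\pi)_i = \Pi^{\uparrow}(\pi_i)$ for each $i$. This is exactly the content of the remark that the canonical choice mapping is invariant to the opponent policy: the decision rule $\Pi^{\uparrow}(\pi)$ assigns, at each PBS and for every $h$ in its support, the distribution $\pi(h_\iota)$, which depends only on the acting player's own component of $\pi$. Hence the player-$0$ component of $\Pi^{\uparrow}(\pi)$ is precisely $\Pi^{\uparrow}(\pi_0)$, and likewise $\Pi^{\uparrow}(\pi)_1 = \Pi^{\uparrow}(\pi_1)$.

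With this decomposition in hand, the two best-response-value terms transfer immediately. Lemma~\ref{lem:value_guar} gives $\min_{\tilde{\pi}_1'}\mathfrak{\tilde{J}}(\Pi^{\uparrow}(\pi_0), \tilde{\pi}_1') = \min_{\pi_1'}\mathfrak{J}(\pi_0, \pi_1')$ and, by the analogous statement for player~$1$, $\max_{\tilde{\pi}_0'}\mathfrak{\tilde{J}}(\tilde{\pi}_0', \Pi^{\uparrow}(\pi_1)) = \max_{\pi_0'}\mathfrak{J}(\pi_0', \pi_1)$. Substituting both equalities into the expression for $\text{expl}(\Pi^{\uparrow}(\pi))$ yields exactly $\text{expl}(\pi)$. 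Since the real work is already carried out inside Lemma~\ref{lem:value_guar}, I do not expect any genuine obstacle here; the corollary is essentially bookkeeping. The only point requiring a moment's care is the component-wise identity $\Pi^{\uparrow}(\pi)_i = \Pi^{\uparrow}(\pi_i)$, which is precisely what licenses invoking the single-player version of the lemma on each term of the exploitability separately.
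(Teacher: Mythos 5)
Your proposal is correct and matches the paper's proof, which likewise derives the corollary immediately from Lemma~\ref{lem:value_guar} together with the fact that exploitability is defined as an average of the two best-response values. You simply spell out the bookkeeping (including the component-wise identity $\Pi^{\uparrow}(\pi)_i = \Pi^{\uparrow}(\pi_i)$) that the paper leaves implicit.
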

\begin{proof}
This follows immediately from Lemma \ref{lem:value_guar} and the fact that exploitability is defined in terms of best response values.
\end{proof}

\begin{corollary}
 \label{cor:value}
 The value of the PuB-AMG is well defined and equal to that of the original game.
\end{corollary}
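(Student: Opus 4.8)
The plan is to sandwich both the PuB-AMG max-min value and its min-max value between two copies of the original game value $V := \max_{\pi_0}\min_{\pi_1}\mathfrak{J}(\pi_0,\pi_1) = \min_{\pi_1}\max_{\pi_0}\mathfrak{J}(\pi_0,\pi_1)$, which is well defined by hypothesis (Remark \ref{lem:game_val} for UG objectives). The two ingredients are Lemma \ref{lem:value_guar}, which says the canonical choice mapping $\Pi^{\uparrow}$ preserves best-response values, and the elementary minimax (weak-duality) inequality $\max_{\tilde{\pi}_0}\min_{\tilde{\pi}_1}\mathfrak{\tilde{J}} \le \min_{\tilde{\pi}_1}\max_{\tilde{\pi}_0}\mathfrak{\tilde{J}}$, which holds for any real-valued payoff and requires no structural assumptions on the PuB-AMG.

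First I would lower-bound the PuB-AMG max-min value. Because the canonical policies $\{\Pi^{\uparrow}(\pi_0)\}$ form a subset of all PuB-AMG player-$0$ policies, restricting the outer maximization to this subset can only decrease it, so $\max_{\tilde{\pi}_0}\min_{\tilde{\pi}_1}\mathfrak{\tilde{J}}(\tilde{\pi}_0,\tilde{\pi}_1) \ge \max_{\pi_0}\min_{\tilde{\pi}_1}\mathfrak{\tilde{J}}(\Pi^{\uparrow}(\pi_0),\tilde{\pi}_1)$. By Lemma \ref{lem:value_guar} the inner best-response value equals $\min_{\pi_1}\mathfrak{J}(\pi_0,\pi_1)$, so the right-hand side is exactly $V$, giving $\max_{\tilde{\pi}_0}\min_{\tilde{\pi}_1}\mathfrak{\tilde{J}} \ge V$. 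A symmetric argument, restricting the outer minimization over $\tilde{\pi}_1$ to the canonical images $\{\Pi^{\uparrow}(\pi_1)\}$ (which can only increase it) and again invoking Lemma \ref{lem:value_guar}, yields $\min_{\tilde{\pi}_1}\max_{\tilde{\pi}_0}\mathfrak{\tilde{J}} \le V$.

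Finally I would chain these bounds with weak duality to obtain $V \le \max_{\tilde{\pi}_0}\min_{\tilde{\pi}_1}\mathfrak{\tilde{J}} \le \min_{\tilde{\pi}_1}\max_{\tilde{\pi}_0}\mathfrak{\tilde{J}} \le V$. Since the chain opens and closes at $V$, every inequality is forced to be an equality; in particular the PuB-AMG max-min equals its min-max, so the PuB-AMG value is well defined, and both equal $V$, the value of the original game. The only delicate point is direction-of-inequality bookkeeping when restricting the outer max and outer min to canonical policies, together with making sure the original game value is genuinely well defined so that $V$ can serve as both endpoints of the sandwich; I do not anticipate a substantive obstacle beyond this, since everything else follows mechanically from Lemma \ref{lem:value_guar} and weak duality.
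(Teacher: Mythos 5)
Your proof is correct. It rests on the same key ingredient as the paper's---Lemma \ref{lem:value_guar}, i.e., that $\Pi^{\uparrow}$ preserves best-response values---but packages it differently. The paper lifts an equilibrium $\pi$ of the original game to $\Pi^{\uparrow}(\pi)$, invokes Corollary \ref{cor:expl} to conclude that exploitability (hence the equilibrium property) is preserved, and then appeals to the standard fact that a zero-sum game admitting an equilibrium has a well-defined value equal to that equilibrium's payoff. You instead run a sandwich: restricting the outer max (resp.\ min) to canonical images gives $\max_{\tilde{\pi}_0}\min_{\tilde{\pi}_1}\mathfrak{\tilde{J}} \ge V$ and $\min_{\tilde{\pi}_1}\max_{\tilde{\pi}_0}\mathfrak{\tilde{J}} \le V$, and weak duality closes the chain. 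Your version makes the ``well defined'' claim explicit without ever asserting that a saddle point exists in the PuB-AMG, whereas the paper's version additionally exhibits one; both arguments still need the original game's value $V$ to be well defined (equilibrium existence there), which you correctly flag and which the paper also assumes implicitly when it writes ``where $\pi$ is an equilibrium.'' Your direction-of-inequality bookkeeping is right, so there is no gap.
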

\begin{proof}
Note that it suffices to show that PuB-AMGs are guaranteed to have an equilibrium with the same expected return as the equilibrium of the original game.
Then consider $\tilde{\pi}=\Pi^{\uparrow}(\pi)$, where $\pi$ is an equilibrium.
Then, since $\pi$ is an equilibrium and, per Corollary \ref{cor:expl}, $\Pi^{\uparrow}$ preserves exploitability, $\tilde{\pi}$ is an equilibrium.
Additionally, since $\Pi^{\uparrow}$ preserves expected return, the original game and the PuB-AMG possess equilibria $\pi$ and $\tilde{\pi}$, respectively, that yield the same expected return.
\end{proof}

We are now ready to prove our two main lemmas.
\begin{lemma} \label{lem:ext}
Let $\pi$ be the equilibrium of a UG objective.
Let $\tilde{s}$ define a subgame of the original game induced by playing $\pi$ for some number of steps.
Then the unique equilibrium $\pi^{\tilde{s}}$ of the subgame, considered as an independent game, is the restriction $\pi^{\mid \tilde{s}}$ of $\pi$ to the subgame.
\end{lemma}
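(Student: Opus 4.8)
The plan is to turn the uniqueness hypothesis into a one-directional obligation. Since the objective is UG, Definition~\ref{def:unique} guarantees that the subgame (viewed as an independent game with initial history distribution $\tilde{s}$) has a \emph{unique} equilibrium $\pi^{\tilde{s}}$; hence it suffices to verify that the restriction $\pi^{\mid \tilde{s}}$ is \emph{some} equilibrium of the subgame, after which uniqueness forces $\pi^{\mid \tilde{s}} = \pi^{\tilde{s}}$. So the entire burden is to show that $\pi^{\mid \tilde{s}}$ is a saddle point of the subgame objective $\mathfrak{J}^{\tilde{s}}$.

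First I would record the decomposition that drives the argument. Because the subgame is induced by playing $\pi$, the public history $h_{\text{pub}}$ defining $\tilde{s}$ is reached under $\pi$ with positive probability, which I denote $P_{\pi}(h_{\text{pub}}) > 0$, and the belief satisfies $\tilde{s}(h) = P_{\pi}(h \mid h_{\text{pub}})$ for $h \in \mathrm{supp}(\tilde{s})$. Writing $\mathfrak{J}$ as the expectation of the per-step terms $\mathfrak{R}(H^t, A^t, \pi(H_{\iota}^t))$ and splitting each trajectory at the step where $h_{\text{pub}}$ is first reached, the portion contributed from the subgame factors exactly as $P_{\pi}(h_{\text{pub}}) \cdot \mathfrak{J}^{\tilde{s}}(\pi^{\mid \tilde{s}})$, since $P_{\pi}(h) = P_{\pi}(h_{\text{pub}})\,\tilde{s}(h)$ and the future evolution from $h$ in the restricted game coincides with that of the original game. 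The crucial feature is that this factorization is stable under deviations acting only inside the subgame: such a deviation leaves untouched both the rewards accrued before $h_{\text{pub}}$ and the reach probability $P_{\pi}(h_{\text{pub}})$ together with the induced belief $\tilde{s}$, as these depend only on behavior prior to $h_{\text{pub}}$.

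Next I would run the deviation argument. Suppose, for contradiction, that $\pi^{\mid \tilde{s}}$ is not an equilibrium of the subgame; then one player---say the maximizer $0$---has a $\sigma_0$ with $\mathfrak{J}^{\tilde{s}}(\sigma_0, \pi_1^{\mid \tilde{s}}) > \mathfrak{J}^{\tilde{s}}(\pi^{\mid \tilde{s}})$ (the minimizer case is symmetric). I lift $\sigma_0$ to a global policy $\pi_0'$ that agrees with $\pi_0$ at every AOH outside the subgame and plays $\sigma_0$ at the AOHs in $\mathbb{H}_0$ consistent with $h_{\text{pub}}$; this is well defined because each $h_0 \in \mathbb{H}_0$ records the public observations and hence determines its public history. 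By the decomposition, $\mathfrak{J}(\pi_0', \pi_1) - \mathfrak{J}(\pi_0, \pi_1) = P_{\pi}(h_{\text{pub}})\,[\mathfrak{J}^{\tilde{s}}(\sigma_0, \pi_1^{\mid \tilde{s}}) - \mathfrak{J}^{\tilde{s}}(\pi^{\mid \tilde{s}})] > 0$, contradicting that $\pi_0$ best-responds to $\pi_1$ in the whole game (which holds because $\pi$ is the UG equilibrium, whose value is well defined by Remark~\ref{lem:game_val}). Thus $\pi^{\mid \tilde{s}}$ is an equilibrium of the subgame, and uniqueness yields the claim.

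I expect the main obstacle to be making the decomposition airtight, in particular handling the regularization terms $\mathfrak{R}(H^t, A^t, \pi(H_{\iota}^t))$, which depend on the acting player's full action distribution rather than only on the realized action. One must check that replacing $\pi_0$ by $\pi_0'$ alters these terms only at AOHs inside the subgame---so the pre-$h_{\text{pub}}$ contribution is genuinely invariant---and that the belief $\tilde{s}$ against which the subgame reward is evaluated is exactly the $\pi$-induced belief, independent of the in-subgame deviation. A secondary point to confirm is that $\mathrm{supp}(\tilde{s})$ consists precisely of the $\pi$-reachable histories at $h_{\text{pub}}$, so that $\pi^{\mid \tilde{s}}$ is defined on every AOH the subgame can visit and $P_{\pi}(h_{\text{pub}}) > 0$.
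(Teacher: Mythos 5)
Your proposal is correct and follows essentially the same route as the paper's proof: decompose the global objective into the pre-subgame contribution, the reach-probability-weighted subgame contribution, and the contributions of the other subgames entered at time $t$; lift a hypothetical profitable subgame deviation to a strictly profitable global deviation, contradicting that $\pi$ is an equilibrium; then invoke uniqueness from the UG property to identify $\pi^{\mid \tilde{s}}$ with $\pi^{\tilde{s}}$. The only cosmetic difference is that you phrase the deviation for the maximizer while the paper takes the minimizer without loss of generality.
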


\begin{proof}
If $\pi^{\mid \tilde{s}}$ is not an equilibrium of the subgame, then it must be exploitable in the subgame.
This means that either
\[\min_{\pi_1'}\mathfrak{J}^{\tilde{s}}(\pi^{\mid \tilde{s}}_0, \pi_1') < \mathfrak{J}^{\tilde{s}}(\pi^{\mid \tilde{s}}_0, \pi^{\mid \tilde{s}}_1) \quad \text{ or } \quad \mathfrak{J}^{\tilde{s}}(\pi^{\mid \tilde{s}}_0, \pi^{\mid \tilde{s}}_1) < \max_{\pi_0'}\mathfrak{J}^{\tilde{s}}(\pi_0', \pi^{\mid \tilde{s}}_1)\]
Without loss of generality, assume the former.
Let $\pi^{\text{br}}_1 = \text{argmin}_{\pi_1'} \mathfrak{J}^{\tilde{s}}(\pi^{\mid \tilde{s}}, \pi_1')$; 
let $\mathcal{P}^{\pi}(\tilde{s})$ represent the probability of reaching $\tilde{s}$ using policy $\pi$;
let $t$ be the time step corresponding to $\tilde{s}$ and let $\mathcal{J}^{<t}$ denote the expected return prior to time $t$.
Further, let $\tilde{s}' \neq \tilde{s}$ range over the possible subgames entered at time $t$ if $\tilde{s}$ is not entered.
Then
\begin{align*}
\mathfrak{J}(\pi_0, \pi_1)
&= \mathfrak{J}^{<t}(\pi_0, \pi_1) + \mathcal{P}^{\pi}(\tilde{s}) \mathfrak{J}^{\tilde{s}}(\pi^{\mid \tilde{s}}_0, \pi_1^{\mid \tilde{s}}) + \sum_{\tilde{s}' \neq \tilde{s}}\mathcal{P}^{\pi}(\tilde{s}') \mathfrak{J}^{\tilde{s}'}(\pi^{\mid \tilde{s}'}_0, \pi_1^{\mid \tilde{s}'})\\
&> \mathfrak{J}^{<t}(\pi_0, \pi_1) + \mathcal{P}^{\pi}(\tilde{s}) \mathfrak{J}^{\tilde{s}}(\pi^{\mid \tilde{s}}_0, \pi^{\text{br}}_1) + \sum_{\tilde{s}' \neq \tilde{s}}\mathcal{P}^{\pi}(\tilde{s}') \mathfrak{J}^{\tilde{s}'}(\pi^{\mid \tilde{s}'}_0, \pi_1^{\mid \tilde{s}'})\\
&= \mathfrak{J}(\pi_0, [\pi^{\text{br}}_1, \pi_1^{\mid -\tilde{s}}]).
\end{align*}
Here, the first line decomposes the expected return into 1) that which is accrued prior to time $t$, 2) that which is accrued in subgame $\tilde{s}$, and 3) that which is accrued after time $t$ outside of subgame $\tilde{s}$.
The second line invokes our assumption that $\pi^{\mid \tilde{s}}_1$ does not achieve the best response value against $\pi^{\mid \tilde{s}}_0$ and $\mathcal{P}^{\pi}(\tilde{s}) > 0$.
The third line re-assembles the expected return, where we use $[\pi^{\text{br}}_1, \pi_1^{\mid -\tilde{s}}]$ to denote a policy that plays $\pi_1$ outside $\tilde{s}$ and $\pi^{\text{br}}_1$ inside $\tilde{s}$.

In total, we have shown that if $\pi^{\mid \tilde{s}}$ is not an equilibrium in the subgame induced by $\tilde{s}$, then $\pi$ is not an equilibrium because $\pi_1$ is not a best response.
Thus, $\pi^{\mid \tilde{s}}$ must be an equilibrium of the subgame.
Therefore, because $\mathfrak{J}$ is UG, we must have $\pi^{\tilde{s}}=\pi^{\mid \tilde{s}}$.
\end{proof}

\begin{lemma} \label{lem:first}
If $\tilde{\pi}$ is an equilibrium of the PuB-AMG, then the decision rule for the first time step $\Pi^{\downarrow}(\tilde{\pi})^0$ must be part of an equilibrium policy in the original game. 
\end{lemma}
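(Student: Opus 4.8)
The plan is to show that player $0$, acting first (we take this without loss of generality; the case where the minimizing player $1$ acts first is symmetric, with the roles of $\max$ and $\min$ exchanged), can still guarantee the game value in the original game after committing to the first decision rule prescribed by $\tilde\pi$, from which being ``part of an equilibrium'' follows by the paper's definition. Write $v$ for the game value, which by Corollary~\ref{cor:value} is well defined and common to the PuB-AMG and the original game. Since the initial PBS is the fixed distribution $\mu$ with no prior play to condition on, inspecting Algorithm~\ref{alg:cor} shows $\Pi^{\downarrow}(\tilde\pi)^0 = \tilde\pi^0(\mu) =: \tilde a^0$; so it suffices to exhibit an original-game policy $\pi_0$ with first decision rule $\tilde a^0$ whose best-response value equals $v$.

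First I would pin down the value of committing to $\tilde a^0$ in the PuB-AMG. Because $\tilde\pi_0$ is an equilibrium strategy it guarantees $v$, and because it has first decision rule $\tilde a^0$ it witnesses
\[
w(\tilde a^0) := \max_{\tilde\pi_0' : (\tilde\pi_0')^0(\mu) = \tilde a^0} \; \min_{\tilde\pi_1'} \; \mathfrak{\tilde J}(\tilde\pi_0', \tilde\pi_1') = v,
\]
the bound $w(\tilde a^0) \le v$ being automatic. Exploiting the perfect-information Markov structure of the PuB-AMG, this commitment value decomposes over successor states,
\[
w(\tilde a^0) = \mathfrak{\tilde R}(\mu, \tilde a^0) + \sum_{\tilde s^1} \tilde{\mathcal{T}}(\tilde s^1 \mid \mu, \tilde a^0)\, \tilde v_{\ast}(\tilde s^1),
\]
since, once $\tilde a^0$ is fixed, the optimal continuations in the disjoint successor subgames are independent and attain the subgame-perfect values $\tilde v_{\ast}(\tilde s^1)$.

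Next I would transfer this to the original game. Define $\pi_0$ to play $\tilde a^0$ at the first step and, in each successor subgame rooted at $\tilde s^1$, to play the player-$0$ component of an equilibrium of that subgame (which exists under the standing assumption). Against this $\pi_0$, the minimizer's best response decomposes across the successor subgames, each indexed by its public observation, so that
\[
\min_{\pi_1'} \mathfrak{J}(\pi_0, \pi_1') = \mathfrak{\tilde R}(\mu, \tilde a^0) + \sum_{\tilde s^1} \tilde{\mathcal{T}}(\tilde s^1 \mid \mu, \tilde a^0)\, v^{\tilde s^1},
\]
where $v^{\tilde s^1}$ is the value of the original subgame, because a subgame equilibrium strategy is held to that subgame's value by a best response. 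The immediate term and the transition weights agree with those in the PuB-AMG expression because at the first step the belief is exactly $\mu$, so $\tilde a^0$ induces identical rewards and successor probabilities in both games. Finally, Corollary~\ref{cor:value} applied to each successor subgame gives $v^{\tilde s^1} = \tilde v_{\ast}(\tilde s^1)$, so the two displays coincide and $\min_{\pi_1'}\mathfrak{J}(\pi_0, \pi_1') = w(\tilde a^0) = v$. Hence $\pi_0$ achieves the game value against a best response, so it is part of an equilibrium, and therefore $\tilde a^0 = \Pi^{\downarrow}(\tilde\pi)^0$ is part of an equilibrium policy.

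The step I expect to be the main obstacle is justifying the two value decompositions, and in particular the clean per-subgame factorization of the minimizer's best response in the original game: one must verify that, after the first decision rule is played, the successor subgames are genuinely independent continuations determined by the public observation, so that a best-responding opponent optimizes within each separately, and that the regularization contributed by $\tilde a^0$ at the first step is accounted for identically by $\mathfrak{R}$ and $\mathfrak{\tilde R}$. This is precisely where the perfect-information Markov structure of the PuB-AMG and the subgame value equality of Corollary~\ref{cor:value} do the essential work; once these are in place the equality chain is routine. Note that only \emph{existence} of subgame equilibria is used here, not uniqueness, consistent with the lemma statement not invoking the UG hypothesis.
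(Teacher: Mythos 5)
Your proof is correct, but it takes a genuinely different route from the paper's. The paper argues via a single chain of max--min equalities: it starts from the game value, fixes the equilibrium's first decision rule inside the PuB-AMG max (losing nothing because $\tilde{\pi}^0$ is part of an equilibrium), pushes everything through $\Pi^{\downarrow}$ using $\mathfrak{\tilde{J}}(\tilde{\pi}') = \mathfrak{J}(\Pi^{\downarrow}(\tilde{\pi}'))$, and then replaces the optimization over PuB-AMG continuations by an optimization over original-game continuations on the grounds that $\Pi^{\downarrow}$ is surjective up to reachability and its first-step image is invariant to later decision rules; the witness policy is an $\arg\max$ appearing at the end of the chain. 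You instead construct the witness explicitly---stitch $\tilde{a}^0 = \tilde{\pi}(\mu)$ to player-$0$ components of equilibria of each successor subgame---and verify it attains the value by matching two backward-induction decompositions, one in the PuB-AMG (via $\tilde{v}_\ast$) and one in the original game (via the minimizer's per-public-observation best response), linked by Corollary~\ref{cor:value} applied subgame-wise. Your decomposition of the opponent's best response is legitimate here because only player $0$ acts at time $0$, so the reach probabilities of the successor public histories are independent of $\pi_1'$ and player $1$'s strategy factors across them; this is the same decomposition the paper itself uses in Lemma~\ref{lem:ext}. What your route buys is an explicit witness and an explicit accounting of exactly which facts are needed (existence of subgame equilibria and subgame-level value equality, neither of which requires UG); what it costs is that you must assume equilibria exist in every successor subgame of the original game, an assumption the paper's chain-of-equalities argument keeps implicit (it still needs the relevant argmax/argmin to be attained, so the gap between the two is presentational rather than substantive).
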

\begin{proof}
Without loss of generality, assume that $\iota=0$ at the first time step.
Also, use $\pi_0^{-0}$ to denote the part of $\pi_0$ that is relevant after the first time step.
Also, let $[\pi^{\prime0}, \pi_0^{\prime-0}]$ denote a policy for $i=0$ that plays according to $\pi^{\prime0}$ at the first time step and $\pi_0^{\prime-0 }$ otherwise.
Let $\tilde{\pi}$ be an equilibrium of the PuB-AMG.
Then observe
\begin{align}
\max_{\pi_0'} \min_{\pi_1'}  \mathfrak{J}(\pi_0', \pi_1') 
&= \max_{\tilde{\pi}_0'} \min_{\tilde{\pi}_1'}  \mathfrak{\tilde{J}}(\tilde{\pi}_0', \tilde{\pi}_1')\\
&= \max_{\tilde{\pi}_0^{\prime-0}} \min_{\tilde{\pi}_1'}  \mathfrak{\tilde{J}}([\tilde{\pi}^0, \tilde{\pi}_0^{\prime-0 }], \tilde{\pi}_1)\\
&= \max_{\tilde{\pi}_0^{\prime-0}} \min_{\tilde{\pi}_1'}  \mathfrak{J}(\Pi^{\downarrow}([\tilde{\pi}^0, \tilde{\pi}_0^{\prime-0 }], \tilde{\pi}_1'))\\
&= \max_{\tilde{\pi}_0^{ \prime-0}} \min_{\tilde{\pi}_1'}  \mathfrak{J}([\Pi^{\downarrow}(\tilde{\pi})^0, \Pi^{\downarrow}([\tilde{\pi}^0, \tilde{\pi}_0^{\prime-0 }], \tilde{\pi}_1')_0^{-0}], \Pi^{\downarrow}([\tilde{\pi}^0, \tilde{\pi}_0^{\prime-0 }], \tilde{\pi}_1')_1)\\
&= \max_{\pi_0^{\prime-0 }} \min_{\pi_1'}  \mathfrak{J}([\Pi^{\downarrow}(\tilde{\pi})^0, \pi_0^{\prime-0 }], \pi_1')\\
&= \min_{\pi_1'} \mathfrak{J}([\Pi^{\downarrow}(\tilde{\pi})^0, \text{arg max}_{\pi_0^{\prime-0 }} \min_{\pi_1''}  \mathfrak{J}([\Pi^{\downarrow}(\tilde{\pi})^0, \pi_0^{\prime-0}], \pi_1'')], \pi_1').
\end{align}
Here, the first equality follows by Corollary \ref{cor:value};
the second equality follows because $\tilde{\pi}^0$ is part of an equilibrium; 
the third equality follows because $\mathfrak{\tilde{J}}(\tilde{\pi}') = \mathfrak{J}(\Pi^{\downarrow}(\tilde{\pi}'))$;
the fourth line equality follows because the image of the correspondence mapping for the first time step is invariant to the PuB-AMG policy at later time steps;
the fifth line follows because each player can express any policy in the original game through $\Pi^{\downarrow}$, up to reachability, and because changes over unreachable AOHs do not change the expected return;
the sixth line follows because the evaluation of an argmax is equal to the max.

This chain of equalities shows that the best response value to \[[\Pi^{\downarrow}(\tilde{\pi})^0, \text{arg max}_{\pi_0^{\prime-0 }} \min_{\pi_1''}  \mathfrak{J}([\Pi^{\downarrow}(\tilde{\pi})^0, \pi_0^{\prime-0}], \pi_1'')]\]
is equal to the value of the game.
Thus, $\Pi^{\downarrow}(\tilde{\pi})^0$ is part of an equilibrium.
\end{proof}

\textbf{Theorem \ref{thm:correspondence}} If $\tilde{\pi}$ is an equilibrium of the PuB-AMG induced by a UG objective, then its corresponding policy $\Pi^{\downarrow}(\tilde{\pi})$ is an equilibrium in the original game.

\begin{proof}
Lemma \ref{lem:first} shows this to be true for the first time step.
Now assume this is true up to time step $t$ and consider time step $t+1$.
Then, for a particular reachable $\tilde{s}^{t+1}$, the PuB-AMG subgame starting at this point is the PuB-AMG of the subgame of the original game starting from $\tilde{s}^{t+1}$.
Thus, the PuB-AMG strategy for $\tilde{s}^{t+1}$ must correspond to an equilibrium of the subgame of the original game, as per Lemma \ref{lem:first}.
Furthermore, because the minimax objective is UG, the equilibrium strategy of the subgame of the original game must the unique restriction of the equilibrium of the original game to that subgame, as per Lemma \ref{lem:ext}.
\end{proof}

\subsection{Continuity in the PuB-AMG with Uniqueness-Guaranteeing Objectives} \label{sec:cont}

\begin{lemma} \label{lem:mm}
Let $f_1$, $f_2$ be real-valued continuous functions with shared compact domain $\mathcal{X}\times \mathcal{Y}$.
Furthermore, assume their max-min values are attained and the following inequality holds for any $(x,y) \in \mathcal{X}\times \mathcal{Y}$:
\[|f_1(x, y)- f_2(x, y)| < \epsilon.\] Then it follows that 
\[|[\max_{x \in \mathcal{X}} \min_{y \in \mathcal{Y}} f_1(x, y)]-[\max_{x \in \mathcal{X}} \min_{y \in \mathcal{Y}} f_2(x, y)]| <
\epsilon.\]
\end{lemma}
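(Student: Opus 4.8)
The plan is to reduce the statement to a routine monotonicity argument for the $\max\min$ operator, using compactness only to upgrade the strict pointwise bound to a strict bound on the values. First I would observe that, since $\mathcal{X}\times\mathcal{Y}$ is compact and $f_1,f_2$ are continuous, the map $(x,y)\mapsto |f_1(x,y)-f_2(x,y)|$ is continuous and hence attains its maximum over $\mathcal{X}\times\mathcal{Y}$; call this maximum $\epsilon'$. By the hypothesis that the gap is strictly below $\epsilon$ at every point, the attained value satisfies $\epsilon' < \epsilon$, and we obtain the \emph{nonstrict} uniform bound $|f_1(x,y)-f_2(x,y)| \le \epsilon'$ for all $(x,y)$. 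This replacement of the strict pointwise bound by a nonstrict uniform one is the only place where continuity and compactness are actually used.

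Next I would record two elementary facts about the $\max\min$ operator. The first is monotonicity: if $g \le h$ pointwise on $\mathcal{X}\times\mathcal{Y}$, then $\max_{x}\min_{y} g(x,y) \le \max_{x}\min_{y} h(x,y)$, since for each fixed $x$ the smaller minimand gives $\min_{y} g(x,y) \le \min_{y} h(x,y)$, and applying $\max_x$ to both sides preserves the inequality. The second is that additive constants commute with both $\min$ and $\max$, so $\max_{x}\min_{y}[\,h(x,y)+c\,] = [\max_{x}\min_{y} h(x,y)] + c$. The attainment hypothesis ensures all the relevant values are well defined.

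Then I would apply these facts in both directions. From $f_1(x,y) \le f_2(x,y)+\epsilon'$ pointwise, monotonicity together with the constant-shift identity gives $\max_{x}\min_{y} f_1 \le \max_{x}\min_{y} f_2 + \epsilon'$; from the symmetric bound $f_2(x,y) \le f_1(x,y)+\epsilon'$ we obtain $\max_{x}\min_{y} f_2 \le \max_{x}\min_{y} f_1 + \epsilon'$. Combining the two yields $|\max_{x}\min_{y} f_1 - \max_{x}\min_{y} f_2| \le \epsilon' < \epsilon$, which is exactly the claim. I do not expect any genuine obstacle here: the substantive content is entirely the compactness step producing the strict value bound, and without it a merely strict pointwise inequality whose supremum equals $\epsilon$ would fail to give the strict conclusion.
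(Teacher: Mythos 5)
Your proposal is correct, and its core is the same as the paper's: pass from the pointwise bound to the two one-sided inequalities $f_1 \le f_2 + c$ and $f_2 \le f_1 + c$, then use monotonicity of $\max_x \min_y$ together with the fact that additive constants commute with $\max$ and $\min$ to bound the difference of the max-min values in both directions. The one genuine difference is your compactness step: you first let $\epsilon' = \max_{(x,y)} |f_1(x,y) - f_2(x,y)|$, note $\epsilon' < \epsilon$ since the maximum is attained and the pointwise bound is strict, and then run the monotonicity argument with $\epsilon'$ in place of $\epsilon$. The paper instead works directly with the nonstrict bounds $f_1 \le f_2 + \epsilon$ and $f_2 \le f_1 + \epsilon$, which only delivers $|\max\min f_1 - \max\min f_2| \le \epsilon$, not the strict inequality claimed in the statement. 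So your extra step is not cosmetic: it is exactly what is needed to obtain the strict conclusion as written, and it is also the only place where the continuity and compactness hypotheses are actually used (the paper's argument never invokes them). For the downstream application in the continuity theorem the nonstrict bound would suffice, but as a proof of the lemma as stated, yours is the tighter one.
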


\begin{proof}
Note that by assumption we have for all $(x,y)$ within the domains of $f_1, f_2$ it holds
\begin{align}
    &f_1(x,y) \leq f_2(x,y) + \epsilon,\label{eq:eps_upper}\\ 
    &f_2(x,y) \leq f_1(x,y) + \epsilon.\label{eq:eps_upper_2}
\end{align}

Therefore,
\begin{align*}
    \max_{x \in \mathcal{X}} \min_{y \in \mathcal{Y}} f_1(x,y) &= \min_{y \in \mathcal{Y}} f_1(x_\ast,y)  \mbox{ for }  x_\ast \in \arg \max_{x \in \mathcal{X}} \min_{y\in\mathcal{Y}} f_1(x,y)\\
     &\leq  \min_{y \in \mathcal{Y}} f_2(x_\ast,y) + \epsilon\\
     &\leq \max_{x \in \mathcal{X}} \min_{y \in \mathcal{Y}} f_2(x,y) +\epsilon.
\end{align*}
Where the first inequality is due to taking the min of both sides of \eqref{eq:eps_upper}.
Following the same steps starting with $f_2$ and using \eqref{eq:eps_upper_2} gives
\[
\max_{x \in \mathcal{X}} \min_{y \in \mathcal{Y}} f_2(x,y) \leq  \max_{x \in \mathcal{X}} \min_{y \in \mathcal{Y}} f_1(x,y) +\epsilon.
\]
These two inequalities together yield the result.
\end{proof}

\textbf{Theorem \ref{thm:val_cont}.}
\textit{Let $\mathfrak{J}$ guarantee the existence of an equilibrium in all subgames. Then the PuB-AMG subgame perfect equilibrium value function is a continuous function from the space of PBSs to real values.}

\begin{proof}
Fix $\epsilon > 0$. Let $b$ and $b'$ differ in total variation distance by less than 
\[\delta = \frac{\epsilon}{2 \mathfrak{M}}.\]
Then observe that, for a joint policy $\pi$, we have that
\begin{align*}
|v_{\pi}(b) - v_{\pi}(b')|
&= |\sum_{h} b(h) v_{\pi}(h) - b'(h) v_{\pi}(h)|\\
&\leq \sum_{h} |b(h) v_{\pi}(h) - b'(h) v_{\pi}(h)|\\
&= \mathfrak{M} \sum_{h}  |b(h) - b'(h)|\\
&< 2 \mathfrak{M} \delta\\
&= \epsilon,
\end{align*}
where $v_{\pi}(b)$ is the expected return under $\mathfrak{J}$ to playing $\pi$ starting from the subgame defined by $b$.

Then
\begin{align*}
|\tilde{v}_{\ast}(b) - \tilde{v}_{\ast}(b')| &= |[\max_{\tilde{\pi}_0} \min_{\tilde{\pi}_1} \tilde{v}_{\tilde{\pi}}(b)] - [\max_{\tilde{\pi}_0} \min_{\tilde{\pi}_1} \tilde{v}_{\tilde{\pi}}(b')]|\\
&= 
| [\max_{\pi_0} \min_{\pi_1} v_{\pi}(b)] - [\max_{\pi_0} \min_{\pi_1} v_{\pi}(b')]|\\
&< \epsilon.
\end{align*}
The first equality follows by definition of $\tilde{v}_{\ast}$.
The second equality follows from Corollary \ref{cor:value}.
The third equality follows from Lemma \ref{lem:mm}.
\end{proof}

\begin{remark}
Theorem \ref{thm:val_cont} shows that continuous objectives yield continuous value functions in the PuB-AMG, even if the objective is not UG.
\end{remark}

Next, we prove the continuity of the equilibrium policy mapping under UG objectives.

\textbf{Theorem \ref{thm:pol_cont}.} \textit{Let $\mathfrak{J}$ be a UG objective be induced by a continuous $\mathfrak{R}$.
Then the PuB-AMG subgame perfect equilibrium induced by $\mathfrak{J}$ is a continuous function from the space of PBSs to the space of public decision rules.}

\begin{proof}
Consider that  $q_{\ast} \colon \tilde{s}, \tilde{a} \mapsto \mathfrak{\tilde{R}}(\tilde{s}, \tilde{a}) + \mathbb{E}_{\tilde{S}' \sim \tilde{\mathcal{T}}(\tilde{s}, \tilde{a})} \tilde{v}_{\ast}(\tilde{S}')$ is continuous because $\mathfrak{\tilde{R}}$ is continuous (since $\mathfrak{R}$ is continuous), $\tilde{\mathcal{T}}$ is continuous by construction, and $\tilde{v}_{\ast}$ is continuous by Theorem \ref{thm:val_cont}.
Then the maximum theorem states that $\pi_{\ast} \colon \tilde{s} \mapsto \text{arg max}_{\tilde{a}'} q_{\ast}(\tilde{s}, \tilde{a}')$ is an upper hemicontinuous function.
Finally, because $\mathfrak{J}$ is UG, we have that $\pi_{\ast}$ is single valued.
The result follows from the fact that single-valued upper hemicontinuous functions are continuous.
\end{proof}

In contrast, for non-UG objectives, equilibrium policy mapping is not necessarily continuous.

\begin{proposition}
Let $\mathfrak{J}$ be an objective be induced by the unregularized reward $\mathcal{R}$. Then the PuB-AMG subgame perfect equilibrium is not generally continuous.
\end{proposition}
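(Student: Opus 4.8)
The plan is to prove the proposition by exhibiting a single counterexample: a 2p0s game whose PuB-AMG contains a subgame (equivalently, a PBS) at which the subgame perfect equilibrium decision rule jumps discontinuously as the PBS varies. The natural candidate is a matching-pennies-style subgame in which the acting player cannot distinguish two histories that share the same AOH, so its decision rule is \emph{forced} to treat them identically; then the optimal action depends only on which of the two histories is more likely under the belief, and this flips across the point where they are equiprobable.

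Concretely, I would take a final-step subgame in which player $0$ (the maximizer) acts at a single AOH $h_0$ that is consistent with two histories $h_H$ and $h_T$ (arising from an unobserved earlier coin flip, as in matching pennies), with rewards $\mathcal{R}(h_H, \text{heads}) = \mathcal{R}(h_T, \text{tails}) = 1$ and $\mathcal{R}(h_H, \text{tails}) = \mathcal{R}(h_T, \text{heads}) = 0$. For the PBS $\tilde{s}_p$ placing mass $p$ on $h_H$ and $1-p$ on $h_T$, a decision rule is determined by the probability $q$ it assigns to heads at $h_0$, and by Definition \ref{def:objective} its unregularized value is $\tilde{\mathcal{R}}(\tilde{s}_p, \tilde{a}) = pq + (1-p)(1-q)$. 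Maximizing over $q \in [0,1]$ yields the unique optimizer $q = 1$ when $p > 1/2$ and $q = 0$ when $p < 1/2$. Hence for every $p \neq 1/2$ the subgame perfect equilibrium decision rule is uniquely pinned down, equaling the point mass on heads for $p > 1/2$ and the point mass on tails for $p < 1/2$.

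The remaining step, which is where the only real subtlety lies, is to conclude that no choice of subgame perfect equilibrium can be continuous at $p = 1/2$. The point is that the equilibrium is forced on the punctured neighborhood $p \neq 1/2$, and the one-sided limits of these forced values disagree: $\lim_{p \downarrow 1/2}$ is the point mass on heads while $\lim_{p \uparrow 1/2}$ is the point mass on tails. Since any admissible value at $p = 1/2$ is a single decision rule, it cannot equal both limits, so every selection of the (at $p=1/2$, set-valued) subgame perfect equilibrium is discontinuous at the PBS $\tilde{s}_{1/2}$. I expect this ``no continuous tie-breaking'' argument to be the main thing to state carefully, precisely because at $p = 1/2$ the unregularized objective fails to be UG and admits a continuum of equilibria; the contrast with Theorem \ref{thm:pol_cont}, where the UG property forces a unique and hence continuous selection (a softmax-like smoothing of $q$ in $p$ under regularization), is exactly what drives the result.
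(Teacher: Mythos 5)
Your proposal is correct and takes essentially the same approach as the paper, which also uses a matching-pennies counterexample in which the second mover's equilibrium decision rule as a function of the belief $p$ jumps at $p = 1/2$ (the paper uses the adversarial variant, so the roles of heads and tails are flipped, but the argument is identical). Your explicit ``no continuous tie-breaking'' step at $p = 1/2$ is a slightly more careful rendering of what the paper leaves implicit.
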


\begin{proof}
Consider the adversarial variant of matching pennies described in Figure \ref{fig:adv_mp_app}.
Let $p$ denote the probability with which the blue player selected heads.
Let $q$ denote the probability with which the red player selects heads.
Then the red player's equilibrium policy is:
\begin{itemize}
    \item If $p > 1/2$, $q = 0$
    \item If $p = 1/2$, $q \in [0, 1]$.
    \item If $p < 1/2$, $q = 1$.
\end{itemize}
The result follows from the fact that the mapping from $p$ to $q$ is not continuous.
\end{proof}

\begin{figure*}
    \centering
    \includegraphics[width=\linewidth]{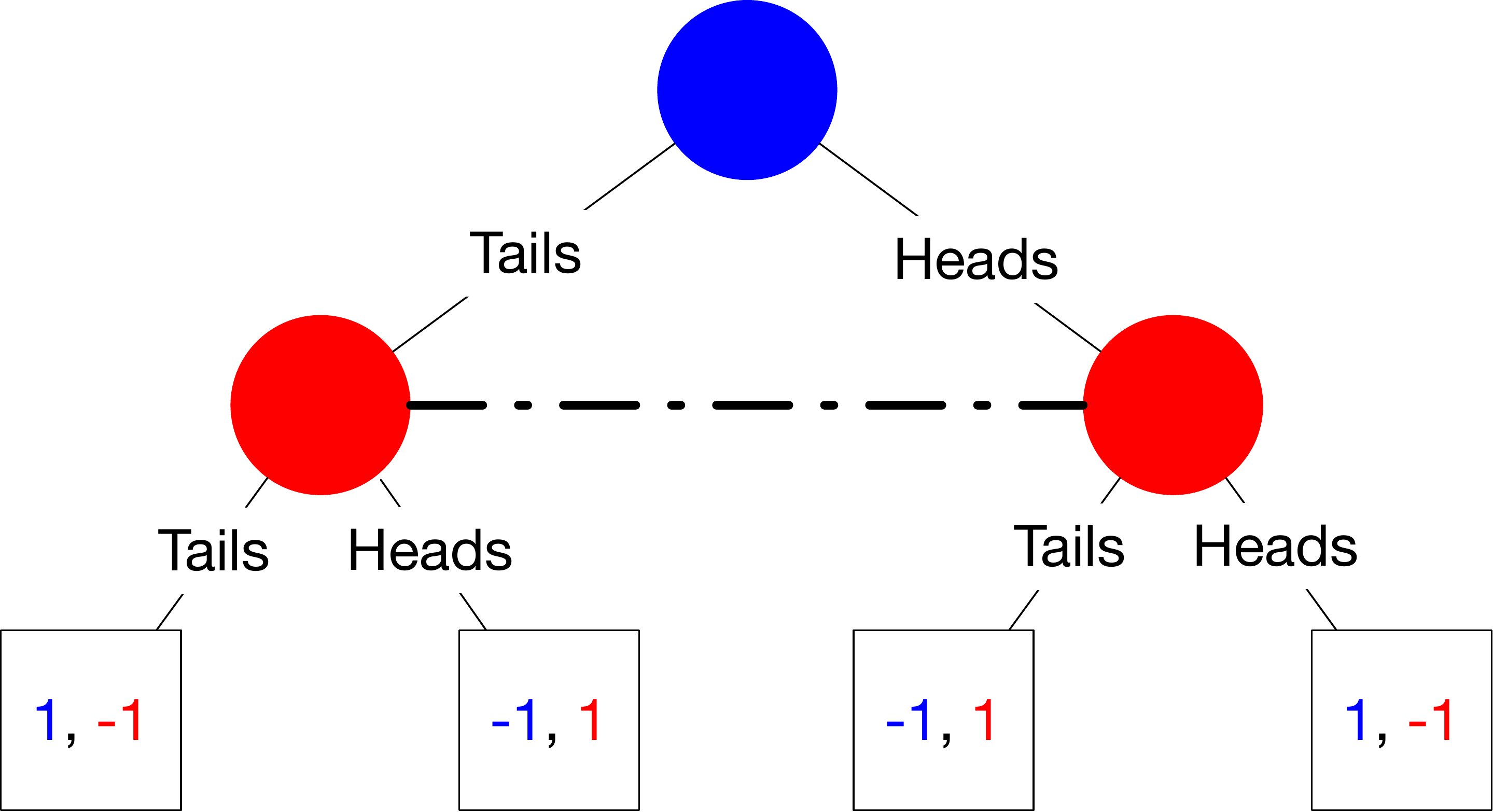}
    \caption{An adversarial variant of the game matching pennies.}
    \label{fig:adv_mp_app}
\end{figure*}

\subsection{Sufficient Conditions for Uniqueness Guaranteeing} \label{sec:suff}

\textbf{Proposition \ref{prop:close}.} Let $\mathfrak{J}$ be a MiniMaxKL objective parameterized by a reference policy $\rho$, placing at least $\epsilon$ probability on every action, and regularization parameter $\alpha$.
Then the exploitability of the MiniMaxKL equilibrium is bounded by $\alpha T |\log \epsilon|$, where $T$ is the horizon of the game.

\begin{proof}
Consider that, at each time step, the component of a player's reward arising from the KL term is at most
\begin{align*}
\max_{\delta \in \Delta(\mathbb{A})} \alpha \text{KL}(\delta, \rho(h_{\iota}))
&= \max_{\delta \in \Delta(\mathbb{A})} \alpha(\mathcal{H}(\delta, \rho(h_{\iota})) - \mathcal{H}(\delta))\\
&\leq \max_{\delta \in \Delta(\mathbb{A})} \alpha \mathcal{H}(\delta, \rho(h_{\iota}))\\
&= \max_{\delta \in \Delta(\mathbb{A})} - \alpha \sum_a \delta(a) \log \rho(h_{\iota}, a)\\
&= \max_{\delta \in \Delta(\mathbb{A})} |\alpha  \sum_a \delta(a) \log \rho(h_{\iota}, a)|\\
&= \max_a |\alpha  \log \rho(h_{\iota}, a)|\\
&\leq \alpha |\log \epsilon |.
\end{align*}
Here, the first line follows from the fact that KL divergence can be decomposed into a sum of cross-entropy and entropy; the second line follows because entropy is positive; the third line is definitional; the fourth line follows because taking the absolute value of a negative number is equivalent to multiplying by negative one; the fifth line follows because weighted sums are maximized by placing all the weight on the largest value; the sixth line follows by assumption.

Because the length of the game is bounded by $T$, the expected return under the regularized objective can differ from the expected return by no more than $T \alpha |\log \epsilon |$.
Now, let $\pi^{\ast}$ be the equilibrium under the regularized objective and let $\pi'$ be a best response under the unregularized objective.
Then we have
\begin{align*}
\text{expl}(\pi^{\ast})
&= \frac{-\mathcal{J}(\pi^{\ast}_0, \pi_1') + \mathcal{J}(\pi_0', \pi^{\ast}_1)}{2}\\ 
&\leq \frac{-\mathfrak{J}(\pi^{\ast}_0, \pi_1') + \mathfrak{J}(\pi_0', \pi^{\ast}_1)}{2} + \alpha T | \log \epsilon |\\
&\leq \alpha T | \log \epsilon |,
\end{align*}
where the second inequality follows because the regularized equilibrium is unexploitable under the regularized objective.
\end{proof}

\begin{theorem} \label{thm:suff}
Consider an objective $\mathfrak{J}$ induced by 
\[\mathfrak{R} \colon (h, a, \delta) \mapsto 
\begin{cases}
\mathcal{R}(h, a) - \psi(\delta, h_{\iota}) & \iota = 0\\
\mathcal{R}(h, a) + \psi(\delta, h_{\iota}) & \iota = 1
\end{cases}\]
and define a policy greedification function
\[g \colon [-\mathfrak{M}, \mathfrak{M}]^{|\mathbb{A}|} \times \mathbb{H}_{\iota} \to \Delta(\mathbb{A})\]
where $\mathfrak{M} \in \mathbb{R}$ is the maximum of the absolute values of the expected returns of $\mathfrak{J}$ and where
\[g \colon q, h_{\iota} \mapsto 
\text{arg max}_{\delta \in \Delta(\mathbb{A})} \langle \delta, q \rangle - \psi(\delta, h_{\iota}).\]
In words, for each AOH at which a player acts $h_{\iota}$, $g$ maps possible regularized action values $q$ to the policy that is greedy with respect to the regularized objective under those regularized action values.

If, for all $h_{\iota}$, $\psi(\cdot, h_{\iota})$ is continuous and $g(\cdot, h_{\iota})$ is i) well defined, ii) continuous, and iii) has an interior image, then the objective $\mathfrak{J}$ is UG.
\end{theorem}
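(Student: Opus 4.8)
The plan is to prove the UG property by reducing it to two facts that the three conditions on $g$ deliver directly: against any \emph{fixed} opponent policy each player has a unique best response that is determined at every AOH, and the regularized game possesses a saddle point. Concretely I would proceed in three steps: (i) show $\text{BR}_i(\pi_{-i})$ is single-valued and full-support for each fixed $\pi_{-i}$; (ii) use single-valuedness and continuity of the best-response maps to obtain existence of a saddle (hence a value) via a fixed-point argument; and (iii) combine the exchangeability of saddle points with the uniqueness from step (i) to conclude that the saddle is unique. Since every subgame is again a finite-horizon sequential game of the same form, the argument applies verbatim to each subgame, which is exactly what Definition \ref{def:unique} requires.

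For step (i), the crucial point is that fixing $\pi_{-i}$ dissolves the backward dependence problem: with the opponent held constant, the belief over histories at each of player $i$'s AOHs is fully determined, so the regularized action values $Q(h_{\iota}, \cdot)$ are well defined and player $i$ faces a single-agent regularized sequential decision problem. I would solve this by backward induction over the horizon: at the final step the acting player's optimal decision rule at $h_{\iota}$ is exactly $g(Q(h_{\iota}, \cdot), h_{\iota})$, which is well defined (hence a unique maximizer) and has interior image (hence full support); propagating the resulting continuation values backward shows the same at every earlier step, yielding a unique best response. Because $g$ has interior image for \emph{both} players, every AOH is reached with positive probability---the opponent's full-support policy never blocks a history and the best response never blocks an action---so the best response is pinned down at every AOH rather than merely on a reachable subtree.

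For step (ii), continuity of $\psi(\cdot, h_{\iota})$ makes $\mathfrak{R}$, and therefore $\mathfrak{J}$, jointly continuous. Together with the single-valuedness from step (i) and continuity of $g(\cdot, h_{\iota})$, the maximum theorem (as already invoked in the proof of Theorem \ref{thm:pol_cont}) shows that $\text{BR}_0$ and $\text{BR}_1$ are continuous single-valued maps. The joint map $(\pi_0, \pi_1) \mapsto (\text{BR}_0(\pi_1), \text{BR}_1(\pi_0))$ is then a continuous self-map of the compact convex product of behavioral-strategy simplices, so Brouwer's theorem furnishes a fixed point, which is precisely a saddle point of $\mathfrak{J}$; in particular $\max_{\pi_0}\min_{\pi_1}\mathfrak{J} = \min_{\pi_1}\max_{\pi_0}\mathfrak{J}$ and the game has a value.

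For step (iii), I would use the standard fact that the set of saddle points of any function is exchangeable: if $(\pi_0^1, \pi_1^1)$ and $(\pi_0^2, \pi_1^2)$ are both saddles, then so is $(\pi_0^1, \pi_1^2)$. Hence $\pi_0^1$ and $\pi_0^2$ are both best responses to the single fixed policy $\pi_1^2$, so by the uniqueness established in step (i) we get $\pi_0^1 = \pi_0^2$, and symmetrically $\pi_1^1 = \pi_1^2$; the solution of $\max_{\pi_0}\min_{\pi_1}\mathfrak{J}$ is therefore unique. I expect the main obstacle to be step (i): one must argue carefully that the interior-image condition simultaneously guarantees that the beliefs entering $Q$ are everywhere well defined and eliminates the usual off-path indeterminacy of equilibrium policies, since it is precisely the full support of both the opponent and the best response that lets the backward induction fix the decision rule at \emph{every} AOH and thereby upgrades uniqueness on the reachable tree to uniqueness of the entire policy.
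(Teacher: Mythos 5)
Your overall architecture---existence via a fixed point, uniqueness via unique best responses plus exchangeability of saddle points---matches the paper's, and your step (iii) via exchangeability is a cleaner route to uniqueness than the paper's two-equilibria contradiction. But there is a genuine gap in steps (i)--(ii), and it is exactly the obstacle you flag at the end without resolving. For an \emph{arbitrary} fixed $\pi_{-i}$---in particular one that is not full support---some of player $i$'s AOHs are reached with probability zero, the posterior beliefs and the action values $Q(h_{\iota},\cdot)$ at those AOHs are $0/0$, and the best response is not pinned down there. So $\text{BR}_i$ is neither single-valued nor continuous on the whole product of simplices (beliefs are ratios that behave discontinuously as reach probabilities vanish), and $(\pi_0,\pi_1)\mapsto(\text{BR}_0(\pi_1),\text{BR}_1(\pi_0))$ is not a continuous single-valued self-map of a compact convex set; Brouwer does not apply as stated. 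You would have to either pass to the best-response \emph{correspondence} and use Kakutani, or do what the paper does: run the fixed-point argument in the space of action-value vectors $[-\mathfrak{M},\mathfrak{M}]^{|\mathbb{H}_{\iota}||\mathbb{A}|}$, where the update map $\mathcal{F}$ sending $q$ to the action values of the joint policy $g(q,\cdot)$ is everywhere well defined, single valued, and continuous precisely because $g$'s interior image forces every history to be reached with positive probability.

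The same reachability issue infects your uniqueness step. Exchangeability reduces uniqueness to ``$\pi_0^1$ and $\pi_0^2$ are both best responses to $\pi_1^2$, hence equal,'' but that conclusion needs $\pi_1^2$ to be full support, which you have established only for best responses to full-support opponents, not for equilibria per se. The paper closes this loop with a separate forward-induction argument: at the root every AOH is reachable, so any equilibrium decision rule there must equal $g$ applied to well-defined action values and is therefore interior; this propagates reachability forward, so \emph{every} equilibrium is full support at every decision point, and only then does the backward-induction uniqueness of best responses apply to it. Add that forward-induction lemma explicitly; with it and the Kakutani (or action-value-space Brouwer) repair, your plan goes through.
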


\begin{proof}
First, we show that such an equilibrium is guaranteed to exist.
Let $\mathcal{F} \colon [-\mathfrak{M}, \mathfrak{M}]^{|\mathcal{H}_{\iota}||\mathcal{A}|} \to [-\mathfrak{M}, \mathfrak{M}]^{|\mathcal{H}_{\iota}||\mathcal{A}|}$ be a function that maps each vector $[q_{h_{\iota}}]_{h_{\iota}}$ to the action-value vector for the joint policy dictated by the application of $g$ to $[(q_{h_{\iota}}, h_{\iota})]_{h_{\iota}}$.
Note that $\mathcal{F}$ is well defined---i.e., the ensuant action values are always well defined---because $g$ maps to the interior, so every history is reached with positive probability.
Also note that this function is continuous, by the continuity of $g$ and $\psi$, and single valued because $g$ is single valued.
Thus, because $[-\mathfrak{M}, \mathfrak{M}]^{|\mathcal{H}_{\iota}||\mathcal{A}|}$ is compact and convex, by Brouwer's fixed point theorem, a fixed point must exist.
The policy corresponding to these fixed-point action values is an equilibrium.
This follows because, by backward induction, each player is optimally responding to the other, holding the other fixed.

Now we show that there is a unique equilibrium.
Note that, for any fixed opponent, the optimal policy at any decision point reached with positive probability must be full support because $g$'s image is within the interior.
By forward induction, this means that every equilibrium must be full support at every decision point.
Now, note that, by backward induction, the best responses to full support policies are unique because $g$ is single valued with an interior image.
In aggregate, these two things show that any equilibrium is strict---i.e., the only best response to one part of the equilibrium is the other part of the equilibrium.
Now, assume there exist two distinct equilibria $\pi$ and $\pi'$.
Without loss of generality, assume that $\pi_0$ performs at least as well as $\pi_0'$ against $\pi_1$.
If $\pi_0$ performs equally well, there is a contradiction because $\pi_0'$ is not the unique best response.
If $\pi_0$ outperforms $\pi'$, there is a contradiction because $\pi'$ is not at equilibrium.
Thus, the equilibrium must be unique.

The result follows because this proof also holds for every subgame of the original game.
\end{proof}

\begin{remark}
The premises of Theorem \ref{thm:suff} are satisfied if $\psi(\cdot, h_{\iota})$ is bounded and is strictly convex and differentiable on its interior with $\lim_{\delta \to \delta'}||\nabla_{\delta} \psi (\delta, h_{\iota})||=+\infty$ for $\delta'$ on the boundary of $\Delta(\mathbb{A})$.
\end{remark}

\begin{remark}
One example of an objective covered by Theorem \ref{thm:suff}, but not by Theorem \ref{thm:perolat}, is that which is induced by setting $\psi(\cdot, h_{\iota})$ to a sum of a KL divergence to an interior point and a bounded differentiable convex function.
\end{remark}

\begin{remark}
The equilibria of objectives satisfying the premises of Theorem \ref{thm:suff} can achieve arbitrarily low exploitabilities by similar reasoning as Proposition \ref{prop:close}.
\end{remark}

\section{Experiments} \label{sec:exp}

\subsection{Magnetic Mirror Descent}

In our experiments, we use magnetic mirror descent (MMD) \citep{mmd} as our game solver.
In the instance of MMD we use, updates are of the form
\begin{align}\label{rl-mmd}
\pi_{t+1} = \text{argmax}_{\pi} \mathbb{E}_{A \sim \pi} q_{\pi_t}(A) + \alpha \mathcal{H}(\pi) - \frac{1}{\eta}\text{KL}(\pi, \pi_t)
\end{align}
where $\pi_t$ is the current policy and $q_{\pi_t}$ is the MiniMaxEnt Q-value vector for time $t$.
This update possesses the closed form
\begin{align} \label{mmd-update}
\pi_{t+1} \propto [\pi_t e^{\eta q_{\pi_t}}]^{\frac{1}{1 + \alpha \eta}}.
\end{align}

The fixed point of equation (\ref{mmd-update}) is a policy satisfying
\begin{align} \label{eq:greedy}
\pi_{\ast} = \text{arg max}_{\pi} \mathbb{E}_{A \sim \pi} q_{\pi_{\ast}}(A) + \alpha \mathcal{H}(\pi) \propto e^{q_{\pi_{\ast}} / \alpha}.
\end{align}
\subsection{Tabular PuB-AMG Policies}

In PuB-AMGs, the state space is continuous.
Thus, it may not be possible to express a fully specified PuB-AMG policy in tabular form.
We describe how we handle this issue for perturbed rock-paper-scissors and Kuhn poker, respectively, in subsequent subsections.

In both settings, we solve the games using full feedback, meaning that we compute exact Q-values and update the policy for every AOH.

\subsubsection{Perturbed Rock-Paper-Scissors}

In perturbed rock-paper-scissors, the first moving player's state space is trivial; thus, its policy can be expressed exactly.
Also, the (regularized) best response of the second player can be computed in closed form using equation (\ref{eq:greedy}).
We update the first-moving player's policy using equation (\ref{mmd-update}) where $q_{\pi_t}$ is the feedback induced by the second-moving player's (regularized) PuB-AMG Nash equilibrium policy.

\subsubsection{Kuhn Poker}

We also investigate MiniMaxEnt objectives in an extensive-form game---Kuhn poker.
In Kuhn poker, there are up to three time steps.
For the third time step, we use the (regularized) PuB-AMG Nash equilibrium policy induced by equation (\ref{eq:greedy}).
For the first time step, at each iteration, we update the policy at each information state using MMD on the feedback from the previous time step.
For the second time step, at iteration $t$, holding fixed the iteration $t$ decision rule for the first time step, we use the policy induced by performing $\sqrt{t}$ iterations of MMD against the (regularized) PuB-AMG Nash equilibrium policy of the third time step.
As $\sqrt{t}$ grows large, we expect the decision rules for the second time step to approximate a PuB-AMG best response.

\begin{figure}
    \centering
    \includegraphics[width=\linewidth]{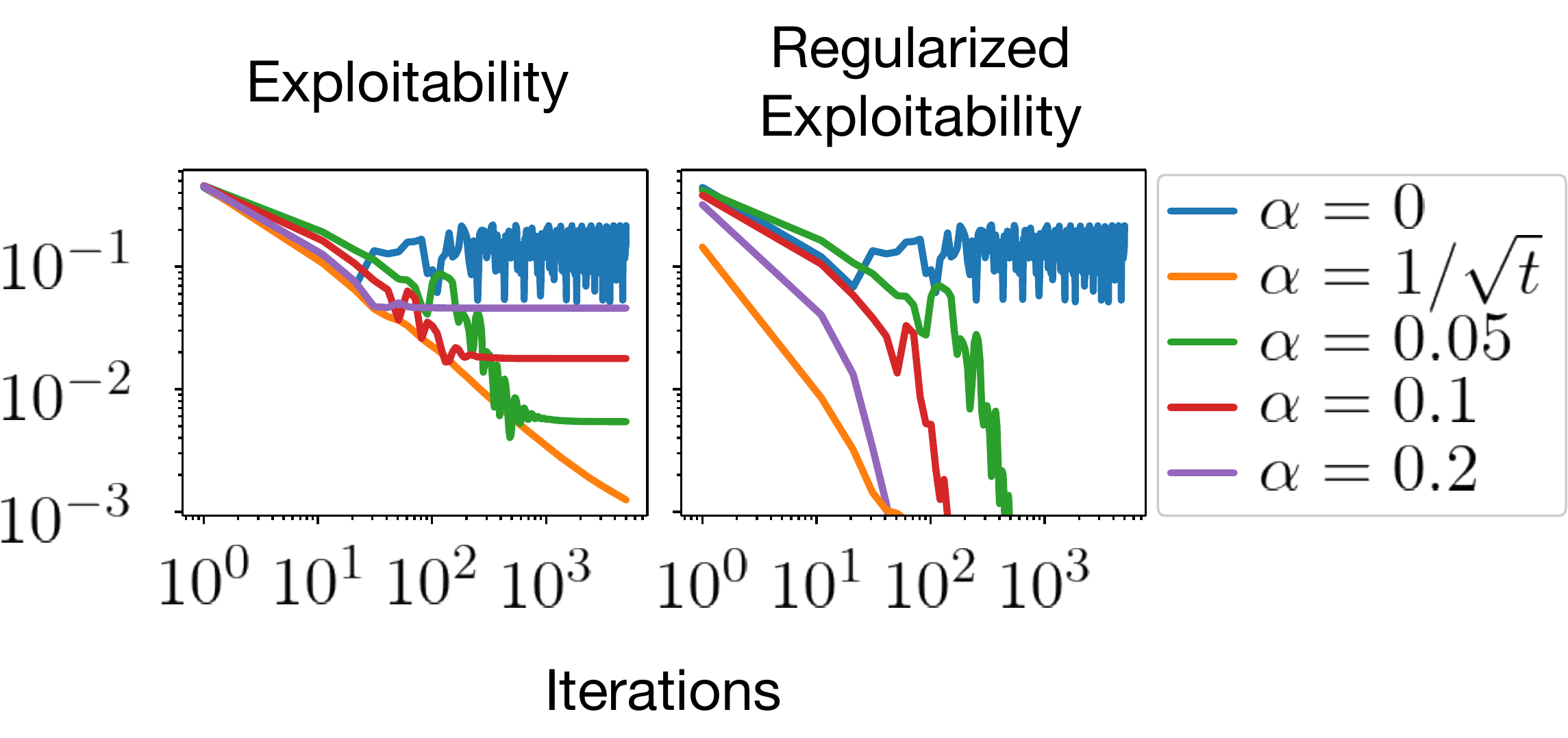}
    \caption{Results for Kuhn poker.}
    \label{fig:kuhn}
\end{figure}

We show the results for the original game in Figure \ref{fig:kuhn}.\footnote{We omit PuB-AMG (regularized) exploitability, as it is difficult to compute exactly in this case.}
Qualitatively, they are analogous to those from the perturbed rock-paper-scissors game.
The unregularized objective induces high exploitability iterates (blue) that do not converge in the original game; the objectives with fixed regularization (purple, red, greed) converge to constant exploitability and zero exploitability in the regularized game; the objective with annealed regularization converges to zero exploitability and zero regularized exploitability.

\section{Discussion on Reduction to Alternating Symmetric-Information Games} \label{sec:reduction}

In the the main body, we discussed a direct reduction from imperfect-information 2p0s games to PuB-AMGs.
In this section, we give a brief discussion on the intermediate reduction to alternating symmetric-information games, where symmetric information is meant as defined below.

\begin{definition}
A symmetric-information game is a game in which all players receive identical observations.
\end{definition}

This intermediate reduction is visualized in Figure \ref{fig:diagram2}.

\begin{figure}
    \centering
    \includegraphics[width=\linewidth]{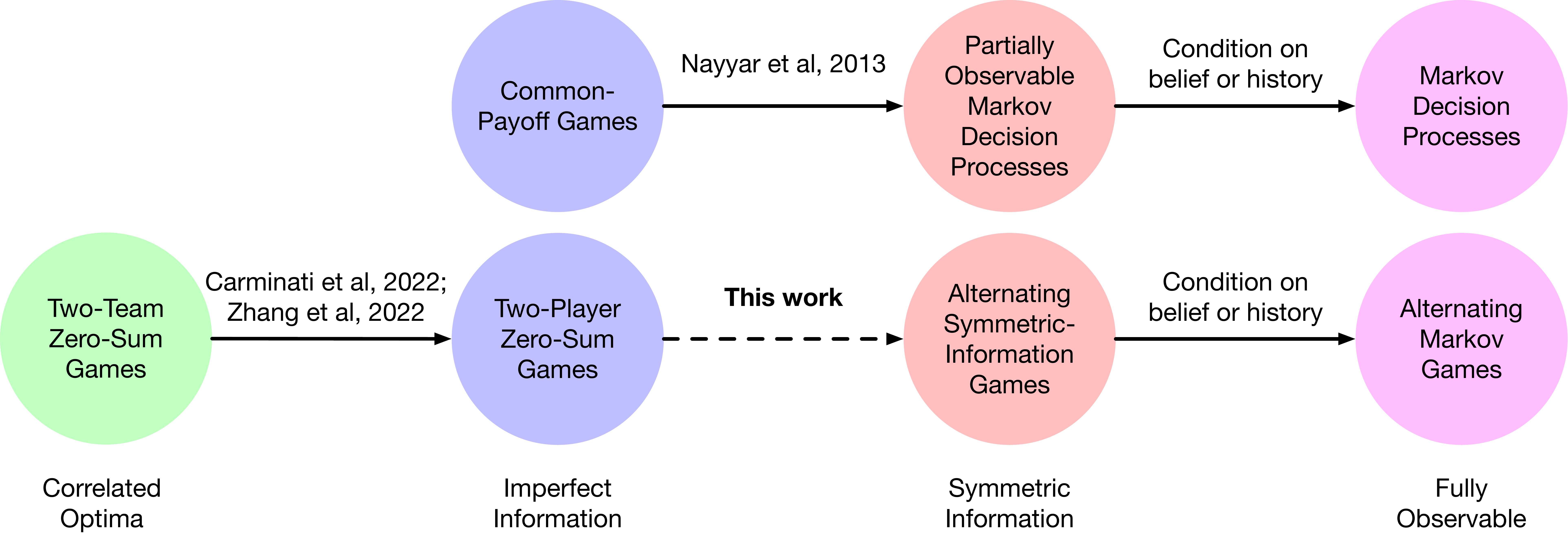}
    \caption{Our main contribution in the context of related work, at an abstract level. Solid lines denote reductions; the dashed line denotes a reduction that holds under a class of regularized objectives.}
    \label{fig:diagram2}
\end{figure}

\subsection{Finite-Horizon Symmetric-Information Sequential Games}

Symbolically, we say a setting is a finite-horizon symmetric-information sequential game if it can be described by a tuple
\[
\langle \mathbb{A}, \mathbb{O}, \mathbb{S}, \mathbb{H}, \mu,
\mathcal{O},
[\mathcal{R}_i], \mathcal{T}, T \rangle,
\] 
where 
\begin{itemize}[leftmargin=*]
    \item $i$ ranges from $0$ to $N-1$ and $\iota$ denotes the acting player.
    \item $\mathbb{A}$ is the set of actions. The actions of all players are assumed to be observable.
    \item $\mathbb{O}$ is the set of observations.
    \item $\mathbb{S}$ is the set of Markov states.
    \item $\mathbb{H} = \cup_t (\mathbb{O} \times \mathbb{A})^t \times \mathbb{O}$ is the set of histories. Because actions are observable and observations are identical, the history of the game is equal to each player's AOH.
    \item $\mu \in \Delta(\mathbb{S})$ is the initial state distribution.
    \item $\mathcal{O} \colon \mathbb{S} \to \mathbb{O}$ is the observation function.
    \item $\mathcal{R}_i \colon \mathbb{S} \times \mathbb{A} \to \mathbb{R}$ is the player $i$'s reward function.
    \item $\mathcal{T} \colon \mathbb{S} \times \mathbb{A} \to \Delta(\mathbb{S})$ is the transition function. 
    \item $T$ is the time horizon at which the game terminates.
\end{itemize}

\subsection{The Public Alternating Symmetric-Information Game}

Let
\[
\langle \mathbb{A},
[\mathbb{O}_i], \mathbb{O}_{\text{pub}}, [\mathbb{H}_i], \mathbb{H}_{\text{pub}},
\mathbb{H},
\mu,
[\mathcal{O}_i], \mathcal{O}_{\text{pub}},
[\mathcal{R}_i], \mathcal{T}, T \rangle,
\] 
be a finite-horizon 2p0s sequential game.
Then we define the associated public alternating symmetric-information game as the following finite-horizon fully-observable sequential game
\[
\langle \mathbb{\tilde{A}}, \mathbb{\tilde{O}}, \mathbb{\tilde{S}}, \mathbb{\tilde{H}}, \tilde{\mu},
\mathcal{\tilde{O}},
[\mathcal{\tilde{R}}_i], \mathcal{\tilde{T}}, \tilde{T} \rangle,
\] 
where
\begin{itemize}[leftmargin=*]
    \item $i$ ranges from $0$ to $1$ and $\iota  \in \{0, 1\}$ is the acting player.
    \item $\mathbb{\tilde{A}} = \{\tilde{a} \mid \tilde{a} \colon \mathbb{H}_{\iota}(h_{\text{pub}}) \to \Delta(\mathbb{A}), h_{\text{pub}} \in \mathbb{H}_{\text{pub}}\}$ is the set of \textit{public decision rules}.
    \item $\mathbb{\tilde{O}} = \mathbb{O}_{\text{pub}}$ is the set of observations.
    \item $\mathbb{\tilde{S}} = \mathbb{H}$ is the set of Markov states.
    \item $\mathbb{\tilde{H}} = \cup_t (\mathbb{\tilde{O}} \times \mathbb{\tilde{A}})^t \times \mathbb{\tilde{O}}$ is the set of histories.
    \item $\tilde{\mu} = \mu$ is the initial state distribution.
    \item $\mathcal{\tilde{O}} \colon \tilde{s} \mapsto \mathcal{O}_{\text{pub}}(\tilde{s})$ is the observation function.
    \item $\mathcal{\tilde{R}}_i \colon (\tilde{s}, \tilde{a}) \mapsto \mathbb{E}_{A \sim \tilde{a}(\tilde{s})} \mathcal{R}_i(\tilde{s}, A)$ is player $i$'s reward function.
    \item $\mathcal{\tilde{T}}(\tilde{s}' \mid \tilde{s}, \tilde{a}) = 
    \mathbb{E}_{A \sim \tilde{a}(\tilde{s})} \mathcal{T}(\tilde{s}' \mid \tilde{s}, A)$
    is the transition function.
    \item $\tilde{T} = T$.
\end{itemize}

As with POMDPs and MDPs, alternating symmetric-information games can be reduced to AMGs by either:
\begin{enumerate}
    \item Treating the (publicly-observable) history $\tilde{h} \in \mathbb{\tilde{H}}$ as the state.
    \item Treating the posterior $\mathcal{P}(\tilde{S} \mid \tilde{h})$ over the state (i.e., the history of the original game) given the (publicly observable) history as the state.
\end{enumerate}
The latter of these two conversions yields the PuB-AMG discussed in the main body.

\end{document}